\newif\ifFullVersion
\DeclarePairedDelimiter{\ceil}{\lceil}{\rceil}
\newcommand{\ignore}[1]{}
\newtheorem{observation}{Observation}
\newenvironment{myproof}{\begin{proof}}{\end{proof}}
\DeclareMathOperator{\MST}{\mathrm{MST}}
\DeclareMathOperator{\dep}{\mathrm{dep}}
\DeclareMathOperator{\idx}{\mathrm{idx}}
\DeclareMathOperator{\W}{\mathcal{W}}
\DeclareMathOperator{\myS}{\mathcal{S}}
\DeclareMathOperator{\T}{\mathcal{T}}
\DeclareMathOperator{\mP}{\mathcal{P}}
\newcommand{\sstart}{s_{\mathrm{start}}}
\newcommand{\send}{s_{\mathrm{end}}}
\newcommand{\sleft}{s_{\mathrm{left}}}
\newcommand{\sright}{s_{\mathrm{right}}}
\newcommand{\sbefore}{t_{\mathrm{before}}}
\newcommand{\safter}{t_{\mathrm{after}}}
\newcommand{\nil}{\mbox{\sc nil}\xspace}
\renewcommand{\leq}{\leqslant}
\renewcommand{\geq}{\geqslant}
\newcommand{\Reals}{{\mathbb{R}}}
\newcommand{\etal}{\emph{et al.}\xspace}
\DeclareMathOperator{\speed}{speed}
\begin{document}

\title{Approximation Algorithms for Multi-Robot Patrol-Scheduling with Min-Max Latency}

\author{Peyman Afshani\inst{1} \and Mark de Berg\inst{2} \and Kevin Buchin\inst{2} \and Jie Gao\inst{3} \and Maarten L\"{o}ffler\inst{4} \and Amir Nayyeri\inst{5} \and Benjamin Raichel\inst{6} \and Rik Sarkar\inst{7} \and Haotian Wang\inst{8} \and Hao-Tsung Yang\inst{8}}

\institute{Department of Computer Science, Aarhus University, Denmark \email{peyman@cs.au.dk} \and 
Department of Mathematics and Computer Science, TU Eindhoven, the Netherlands \email{\{M.T.d.Berg, k.a.buchin\}@tue.nl} 
\and Department of Computer Science; Rutgers University; New Brunswick, NJ 08901, USA \email{jg1555@rutgers.edu}
\and Department of Information and Computing Sciences, Utrecht University, the Netherlands \email{m.loffler@uu.nl}
\and School of Electrical Engineering and Computer Science, Oregon State University, OR 97330, USA \email{nayyeria@eecs.oregonstate.edu} 
\and Department of Computer Science; University of Texas at Dallas; Richardson, TX 75080, USA \email{benjamin.raichel@utdallas.edu}
\and School of Informatics, University of Edinburgh, Edinburgh, U.K. \email{rsarkar@inf.ed.ac.uk}
\and Department of Computer Science; Stony Brook University; Stony Brook, NY 11720, USA \email{\{haotwang, haotyang\}@cs.stonybrook.edu}}
\authorrunning{Afshani, de Berg, Buchin, Gao, L\"{o}ffler, Nayyeri, Raichel, Sakar, Wang, Yang}
\titlerunning{Approx. Alg. for Multi-Robot Patrol-Scheduling with Min-Max Latency}


\maketitle              

\begin{abstract}
We consider the problem of finding patrol schedules for $k$ robots to visit a given set of $n$ sites in a metric space. Each robot has the same maximum speed and the goal is to minimize the weighted maximum latency of any site, where the latency of a site is defined as the maximum time duration between consecutive visits of that site. The problem is NP-hard, as it has the traveling salesman problem as a special case (when $k=1$ and all sites have the same weight). We present a polynomial-time algorithm with an approximation factor of $O(k^2 \log \frac{w_{\max}}{w_{\min}})$ to the optimal solution, where $w_{\max}$ and $w_{\min}$ are the maximum and minimum weight of the sites respectively. Further, we consider the special case where the sites are in 1D. When all sites have the same weight, we present a polynomial-time algorithm to solve the problem exactly. If the sites may have different weights, we present a $12$-approximate solution, which runs in  time $(n w_{\max}/w_{\min})^{O(k)}$.
\end{abstract}

\keywords{Approximation, Motion Planning, Scheduling}

%

\section{Introduction}
Monitoring a given set of locations over a long period of time has many applications, ranging from infrastructure inspection and data collection to surveillance for public or private safety. 
Technological advances have opened up the possibility to perform these tasks using autonomous robots. 
To deploy the robots in the most efficient manner is not easy, however, and gives rise to interesting algorithmic challenges. This is especially true when
multiple robots work together in a team to perform the task.

We study the problem of finding a \emph{patrol schedule} for a collection of $k$~robots that together monitor a given set of $n$ sites in a metric space, 
where $k$ is a fixed parameter. Each robot has the same maximum speed---from now on assumed to be \emph{unit speed}---and each site has a weight. The goal is to minimize the 
maximum weighted latency of any site. Here the \emph{latency} of a site is defined as the maximum time duration between consecutive visits of that site (multiplied by its weight). A patrol schedule specifies for each robot its starting position and an infinitely long schedule describes how the robot moves over time from site to site. 
\medskip

\noindent\textbf{Related Work.} If $k=1$ and all sites have the same weight, the problem reduces to the Traveling Salesman Problem (TSP) because then the optimal patrol schedule is to have the robot repeatedly traverse an optimal TSP tour. Since TSP is NP-hard even in Euclidean space~\cite{papadimitriou1977euclidean}, this means our problem is NP-hard for sites in Euclidean space as well. There are efficient approximation algorithms for TSP, namely, a $(3/2)$-approximation 
for metric TSP~\cite{christofides1976worst} and a polynomial-time approximation scheme (PTAS) for Euclidean TSP~\cite{arora1998polynomial,mitchell}, which carry over to the patrolling problem for the case where $k=1$ and all sites are of the same weight.

Alamdari~\etal~\cite{alamdari2014persistent}
considered the problem with one robot (i.e., $k=1$) and sites of possibly different weights. 
It can then be profitable to deviate from a TSP tour by visiting heavy-weight
sites more often than low-weight sites. Alamdari~\etal provided algorithms for general graphs with either $O(\log n)$ or $O(\log \varrho)$ approximation ratio, where $n$ is the number of sites and $\varrho$ is the ratio of the maximum and the minimum weight. 

For $k>1$ and even for sites of uniform weights, the problem is significantly harder than for a single robot, since it requires careful coordination of the schedules of the individual robots. The problem for $k>1$ has been studied in the robotics literature under various names, including continuous sweep coverage, patrolling, persistent surveillance, and persistent monitoring~\cite{Elmaliach:2008:RMF:1402383.1402397,6094844,yang2019patrol,liujointinfocom2017,6106761,6042503}. The dual problem has been studied by Asghar~\etal~\cite{asghar2019multi} and Drucker~\etal~\cite{drucker2016cyclic}, where each site has a latency constraint and the objective is to minimize the number of robots to satisfy the constraint among all sites. They provide a $O(\log \rho)$-approximation algorithm where $\rho$ is the ratio of the maximum and the minimum latency constraints. When the objective is to minimize the latency, despite all the works for practical settings, we are not aware of any papers that provide worst-case analysis. There are, however, several closely related problems that have been studied from a theoretical perspective. 

The general family of \emph{vehicle routing problems} (VRP)~\cite{dantzig1959truck} asks for $k$ tours, for a given~$k$, that start from a given depot~$O$ such that all customers' requirements and operational constraints are satisfied and the global transportation cost is minimized. There are many different formulations of the problem, such as time window constraints in pickup and delivery, variation in travel time and vehicle load, or penalties for low quality services;
see the monographs by Golden~\etal~\cite{golden2008vehicle} or T\'oth and Vigo~\cite{toth2002vehicle} for surveys.

In particular, the \emph{$k$-path cover} problem aims to find a collection of $k$ paths that cover the vertex set of the given graph such that the maximum length of the paths is minimized. It has a $4$-approximation algorithm~\cite{arkin2006approximations}. The \emph{min-max tree cover} problem is to cover all the sites with $k$ trees such that the maximum length of the trees is minimized. Arkin~\etal~\cite{arkin2006approximations} proposed a $4$-approximation algorithm for this problem, which was improved to a $3$-approximation by Kahni and Salavatipour~\cite{khani2014improved} and to a  $(8/3)$-approximation by Xu~\etal~\cite{xu2013approximation}. 
The \emph{$k$-cycle cover} problem asks for $k$ cycles (instead of paths or trees) to cover all sites. For minimizing the maximum cycle length, there is an algorithm with an approximation factor of $16/3$~\cite{xu2013approximation}. For minimizing the sum of all cycle lengths, there is a $2$-approximation for the metric setting and a PTAS in the Euclidean setting~\cite{khachai2015polynomial,khachay2016polynomial}. Note that all problems above ask for tours visiting each site once (or at most once), while our patrolling problem asks for schedules where each site is visited infinitely often. 

When the patrol tours are given (and the robots may have different speeds), the scheduling problem is termed the \textit{Fence Patrolling Problem} introduced by Czyzowicz et al.~\cite{czyzowicz2011boundary}. 
Given a closed or open fence (a rectifiable Jordan curve) of length $\ell$ and $k$ robots of maximum speed $v_1, v_2, \ldots, v_k >0$ respectively, the goal is to find a patrolling schedule that minimizes the maximum latency $L$ of any point on the fence. Notice that our problem focuses on a discrete set of $n$ sites while the fence patrolling problem focuses on visiting all points on a continuous curve.
For an open fence (a line segment), a simple partition strategy is proposed, in which each robot moves back and forth in a segment whose length is proportional to its speed. The best solution using this strategy gives the optimal latency if all robots have the same speed and a $2$-approximation of the optimal latency when robots have different maximum speeds. Later, the approximation ratio was improved to $\frac{48}{25}$ by Dumitrescu et al.~\cite{dumitrescu2014fence} allowing the robots to stop. Finally, this ratio is improved to $\frac{3}{2}$ by Kawamura and Soejima~\cite{kawamura2015simple} and the speeds of robots are varied in the patrolling process. 

\begin{figure}
\vspace*{-8mm}
	\begin{tabular}{ccc}
		\begin{minipage}{.37\columnwidth}
    		\centering
    		\includegraphics[width=.5\linewidth]{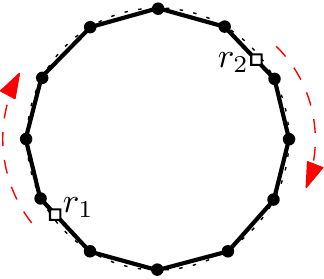}
	\end{minipage}  & 
		\begin{minipage}{.32\columnwidth}
    		\centering
    		\includegraphics[width=.8\linewidth]{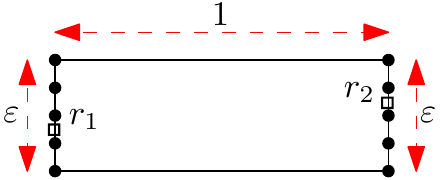}
	\end{minipage}  & 
			\begin{minipage}{.2\columnwidth}
    		\centering
    		\includegraphics[scale = 0.35]{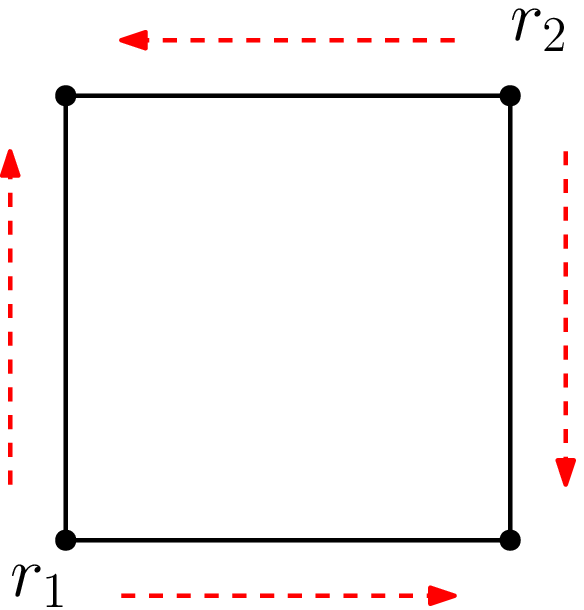}
	\end{minipage} 
	\end{tabular}
\caption{Left: Two robots with $n$ sites evenly placed on a unit circle. The optimal solution is to place two robots, maximum apart from each other, along the perimeter of a regular $n$-gon.
Middle: Two robots with two clusters of vertices of distance $1$ apart. The optimal solution is to have two robots each visiting a separate cluster.
Right: A non-periodic optimal solution.}
\label{fig:example}\vspace*{-4mm}
\end{figure}

%
%


\noindent\textbf{Challenges.} 
For scheduling multiple robots, a number of new challenges arise. One is that already for $k=2$ and all sites of weight $1$ the optimal schedules may have very different structures. For example, if the sites form a regular $n$-gon for sufficiently large $n$, as in Figure~\ref{fig:example} (left), 
an optimal solution would place the two robots at opposite points on the $n$-gon and 
let them traverse the $n$-gon at unit speed in the same direction. 
If there are two groups of sites that are far away from each other,
as in Figure~\ref{fig:example} (middle), it is better to assign each robot 
to a group and let it move along a TSP tour of that group. Figure~\ref{fig:example} (middle)
also shows that having more robots will not always
result in a lower maximum latency. Indeed, adding a third robot in Figure~\ref{fig:example} (middle) 
will not improve the result: during any unit time interval, one of the two groups
is served by at most one robot, and then the maximum latency within that group equals the maximum latency that can already be achieved by two robots for the 
whole problem. 
The two strategies just mentioned---one cycle with all robots evenly placed on it,
or a partitioning of the sites into $k$ cycles, one cycle per robot exclusively---have 
been widely adopted in many practical settings~\cite{elmaliach2009multi,portugal2014finding}. 
Chevaleyre~\cite{chevaleyre2004theoretical} studied the performance of the two strategies but did not provide any bounds. 

Note that the optimal solutions are not limited to the two strategies mentioned above. For example, for three robots it might be best to partition the sites into two groups and assign two robots to one group and one robot to the other group. 
There may even be completely unstructured solutions, that are not even periodic. See Figure~\ref{fig:example} (right) for an example. There are four sites at the vertices of a square with two robots that initially stay on two opposite corners. $r_1$ will choose randomly between the horizontal or vertical direction. Correspondingly, robot $r_2$ always moves in the opposite direction of $r_1$. In this way, all sites have maximum latency $2$ which is optimal. This solution is not described by cycles for the robots, and is not even periodic. Observe that for a single robot, slowing down or temporarily stopping never helps to reduce latency. But for multiple robots, it is not easy to argue that there is an optimal solution in which robots never slow down or stop.

When sites have different weights, intuitively the robots have to visit sites with high weights more frequently than others. Thus, coordination among multiple robots becomes even more complex.
\noindent\textbf{Our results.} We present a number of exact and approximation algorithms which all run in polynomial time. In Section~\ref{General_Metric} we consider the weighted version in the general metric setting and presented an algorithm with approximation factor of $O(k^2\log \frac{w_{\max}}{w_{\min}})$, where $w_{\max}$ and $w_{\min}$ are the maximum weight and minimum weight respectively. The main insight is to obtain a good assignment of the sites to the $k$ robots. We first round up all the weights to powers of two, which only introduces a performance loss by a factor of two. The number of different weights is in the order of $O(\log \frac{w_{\max}}{w_{\min}})$. Given a target maximum weighted latency $L$, we obtain the $t$-min-max tree cover for each set of sites of the same weight $w$, for the smallest possible value $t\leq k$ such that the max tree weight in the tree cover is no greater than $O(L/w)$. 
Then we assign the sites to the $k$ robots sequentially by decreasing weights. Each robot is assigned a depot tree with one of the vertices as the depot vertex. The subset of vertices of a new tree are allocated to existing depots/robots if they are sufficiently nearby; and if otherwise, allocated to a `free' robot. We show that if we fail in any of the operations above (e.g., trees in a $k$-min-max tree cover are too large or we run out of free robots), $L$ is too small. We double $L$ and try again. We prove that the algorithm succeeds as soon as $L \geq L^*$, where $L^*$ is the optimal weighted latency. At that point we can start to design the patrol schedules for the $k$ robots, by using the algorithm in~\cite{alamdari2014persistent}. 

In Section~\ref{sec:1D} we consider the special case where all the sites are points in~$\Reals^1$. When the sites have uniform weights, there is always an optimal solution consisting of $k$ disjoint zigzag schedules (a zigzag schedule is a schedule where a robot travels back and forth along a single fixed interval in~$\Reals^1$), one per robot. Such an optimal solution can be computed in polynomial time by dynamic programming.

When these sites are assigned different weights and the goal is to minimize the maximum weighted latency, we show that there may not be an optimal solution that consists of only disjoint zigzags. Cooperation between robots becomes important. In this case, we turn the problem into the 
Time-Window Patrolling Problem,
the solution to which is a constant approximation to our patrol problem. 
Again we round the weights to powers of two. In the time-window problems, we chop the time axis into time windows of length inversely proportional to the weight of a site -- the higher the weight, the smaller its window size -- and require each site to be visited within its respective time windows. 
This way we have a $12$-approximation solution in time $O((nw_{\max}/w_{\min})^{O(k)})$, where the maximum weight is $w_{\max}$ and the minimum weight is $w_{\min}$.

\section{Problem Definition}

        

As stated in the introduction, our goal is to design a schedule for a set of $k$ robots visiting a set of $n$ sites in such a way that the maximum weighted latency at any of the sites is minimized. It is most intuitive to consider the sites as points in Euclidean space, and the robots as points moving in that space. However, our solutions will actually work in a more general metric space, as defined next. Let $(P,d)$ be a metric space on a set $P$ of $n$ sites, where the distance between two sites~$s_i,s_j \in P$ is denoted by $d(s_i,s_j)$. Consider the undirected complete graph $G=(P,P\times P)$. We view each edge $(s_i,s_j)\in P\times P$ as an interval of length~$d(s_i,s_j)$---so each edge becomes a continuous 1-dimensional space in which the robot can travel---and we define $C(P,d)$ as the continuous metric space obtained in this manner.  From now on, and with a slight abuse of terminology, when we talk about the metric space $(P,d)$ we refer to the continuous metric space $C(P,d)$.
\medskip

Let $R:=\{r_1,\dots, r_k\}$ be a collection of robots
moving in a continuous metric space~$C(P,d)$. We assume without loss of generality
that the maximum speed of the robots is~1. A \emph{schedule} for a robot~$r_j$ is a
continuous function $f_j:\Reals^{\geq 0}\rightarrow C(P,d)$, where $f_j(t)$ specifies 
the position of~$r_j$ at time~$t$. A schedule must obey the speed constraint, that is,
we require $d(f_j(t_1),f_j(t_2))\leq |t_1-t_2|$ for all~$t_1,t_2$. A \emph{schedule 
for the collection~$R$ of robots}, denoted~$\sigma(R)$, is a collection of schedules~$f_j$, 
one for each robot in~$r_j\in R$. (We allow robots to be at the same location at the same time.)
We call the schedule of a robot~$r_j$ \emph{periodic} 
if there exists an offset $t^*_j\geq 0$ and period length $\tau_j>0$
such that for any integer $i\geq 0$ and any $0\leq t<\tau_j$ 
we have $f_j(t^*_j + i\tau_j +t)=f_j(t^*_j + (i+1)\tau_j +t)$. A schedule $\sigma(R)$ is periodic 
if there are $t^*_R\geq 0$ and $\tau_R> 0$ such that for any integer~$i>0$ and any $0\leq t<\tau_R$
we have $f_j(t^*_R + i\tau_R +t)=f_j(t^*_R + (i+1)\tau_R +t)$ for all robots~$r_j\in R$.
It is not hard to see that in the case that all period lengths are rational, $\sigma(R)$ is periodic if and only if the schedules of all robots are periodic. 

We say that a site $s_i\in P$ is \emph{visited} at time~$t$ 
if $f_j(t)=s_i$ for some robot~$r_j$. Given a schedule $\sigma(R)$, the \emph{latency} $L_i$ 
of a site $s_i$ is the maximum time duration during which~$s_i$ is not visited by any robot. 
More formally,
\[
L_i = \sup_{0\leq t_1<t_2} \{ |t_2-t_1| : \mbox{$s_i$ is not visited during the time interval $(t_1,t_2)$} \}
\]
%
%
We only consider schedules where the latency of each site is finite. Clearly such
schedules exists: if $T_{\mathrm{opt}}$ denotes the length of an optimal TSP tour 
for the given set of sites, then we can always get a schedule where $L_i = T_{\mathrm{opt}}/k$
by letting the robots traverse the tour at unit speed at equal distance from each other. Given a metric space~$(P,d)$ and a collection $R$ of $k$ robots, 
the \emph{(multi-robot) patrol-scheduling problem} is to find a schedule~$\sigma(R)$ minimizing the \emph{weighted latency} 
$L := \max_i w_i L_i$, where site $i$ has weight $w_i$ and maximum latency $L_i$.  

Note that it never helps to move at less than the maximum speed between sites---a robot may just as well move at maximum speed and then wait for some time at the next site. Similarly, it does not help to have a robot start at time $t=0$ ``in the middle'' of an edge. Hence, we assume without loss of generality that each robot starts at a site and that at any time each robot is either moving at maximum speed between two sites or it is waiting at a site. 


\section{Approximation Algorithms in a General Metric}\label{General_Metric}

\def\factor{k^2m}


For sites with weights in a general metric space $(P, d)$, we design an algorithm with approximation factor $O(\factor)$ for minimizing the max weighted latency of all sites by using $k$ robots of maximum speed of $1$, where $m=\log \frac{w_{\max}}{w_{\min}}$. 
Without loss of generality, we assume that the maximum weight among sites is 1. We first round the weight of each site to the least dyadic value and solve the problem with dyadic weights. That is, if node $i$ has weight $w_i$, we take $w'_i=\sup \{2^x | x \in \mathbb{Z} \text{ and } 2^x \geq w_i \}$. Clearly, $w_i\leq w'_i< 2w_i$. This will only introduce another factor of $2$ in the approximation factor on the maximum weighted latency. In the following we just assume the weights are dyadic values. Suppose the smallest weight of all sites is $1/2^{m}$. Denote by $W_j$ the collection of sites of weight $1/2^{j}$. $W_j$ could be empty. Let $\W$ denote the collection of all non-empty sets $W_j$, $0\leq j\leq m$. Note that $|\W| \leq m+1 = \log \frac{w_{\max}}{w_{\min}}+1$. We assume we have a $\beta$-approximation algorithm $\mathcal{A}$ available for the min-max tree cover problem. The currently best-known approximation algorithm has $\beta=8/3$~\cite{xu2013approximation}. 

The intuition of our algorithm is as follows. We first guess an upper bound $L$ on the optimal maximum weighted latency and run our algorithm with parameter $L$. If our algorithm successfully computes a schedule, its maximum weighted latency is no greater than $\beta \factor L$. If our algorithm fails, we double the value of $L$ and run again. We prove that if our algorithm fails, the optimal maximum weighted latency must be at least $L$. Thus, when we successfully find a schedule, its maximum weighted latency is an $O(\factor)$ approximation to the optimal solution. The following two procedures together provide what is needed. 

\begin{itemize}
    \item Algorithm \textsc{$k$-robot assignment}($\W,L$), returns \textsc{False} when there does not exist a schedule with max weighted latency $\leq L$, or, returns $k$ groups: $\T(r_1), \T(r_2), \cdots \T(r_k)$, where $\T(r_i)$ includes a set of trees that are assigned to robot $r_i$. Every site belongs to one of the trees and no site belongs to two trees in the union of the groups. For robot $r_i$, one of the trees in $\T(r_i)$ is called a depot tree $T_{\dep}(r_i) $ and one vertex with the highest weight on the depot tree is a \emph{depot} for $r_i$, denoted by $x_{\dep}(r_i)$. 
    \item With the trees $\T(r_i)$ assigned to one robot $r_i$, Algorithm \textsc{Single Robot Schedule}($\T(r_i)$) returns a single-robot schedule such that every site covered by $\T(r_i)$ has maximum weighted latency $O(\factor \cdot L)$.
\end{itemize}
Denote by $V(T)$ the set of vertices of a tree $T$ and by $d(s_i, s_j)$ the distance between two sites $s_i$ and $s_j$. See the pseudo code of the two algorithms.

\smallskip
\noindent\fbox{\scalebox{0.97}{\begin{minipage}{\textwidth}
			{\sc $k$-robot assignment ($\W,L$)}
\begin{algorithmic}[1]
  \label{alg:robot assignment}
  \For{every set $W_j\in \W$}
    \For{$t \leftarrow$ 1 to $k$} 
        \State Run algorithm $\mathcal{A}$ to obtain a $t$-min-max tree cover $\mathcal{C}^j_t$ on $W_j$.
    \EndFor
    \State $q_j \leftarrow$ smallest integer $t$ s.t. the max weight of trees in $\mathcal{C}^j_t$
    is $< \beta \cdot 2^{j} L$ 
    \State If there is no such $q_j$ then return \textsc{False} \label{algo:k-robot-schedule_first return false}
    \State $\T(W_j) \leftarrow \mathcal{C}^j_{q_j}$
    \EndFor
  \State Set all robots as ``free'' robots, i.e., not assigned a depot tree.
  \For{$j \leftarrow$ 0 to $m$} \Comment{Assign trees to robots} \label{algo:k-robot-schedule_number of groups}
    \For{every tree $T$ in $\T(W_j)$}
    \label{algo:k-robot-schedule_number of trees in each group}
        \State $Q\leftarrow V(T)$
        \For{every non-free robot $r$} 
            \State Let $j'$ be such that $x_{\dep}(r) \in W_{j'}$
            \State $Q' \leftarrow \{v| v\in Q, d(v,x_{\dep}(r)) \leq k 2^{j'}L \}$
            \label{algo:k-robot-schedule_add tree into non-free robot}
            \State Compute $\MST(Q')$ and assign it to robot $r$.
            \State $Q \leftarrow Q \setminus Q'$ 
        \EndFor
        \If $Q\neq \emptyset$ 
             \If{no free robot} 
         \State Return \textsc{False}.
        \Else 
        \State Pick a free robot $r$ and set $T_{\dep}(r) \leftarrow MST(Q)$ 
        \State Pick an arbitrary vertex $x$ in $T_{\dep}(r)$ and set $x_{\dep}(r) \leftarrow x$
        \EndIf
        \EndIf
    \EndFor
  \EndFor
  \State \parbox[t]{110mm}{For each robot $r_i$, let $\T(r_i)$ be the collection of trees assigned to $r_i$, including its depot tree, and return the collections $\T(r_1),\ldots,\T(r_k)$.}
\end{algorithmic}
\end{minipage}}}
\smallskip

The following observation is useful for our analysis later. 
\begin{lemma}\label{lem:far-away-different-depot-trees}
In \textsc{$k$-robot assignment}($\W,L$), the depots $s_i$ and $s_j$, with $w_i\geq w_j$, for different robots have distance more than $k L/w_i$.
\end{lemma}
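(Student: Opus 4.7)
The plan is to trace through the execution of \textsc{$k$-robot assignment}, exploiting the precise threshold used in its absorption step. The first observation I would make is that the test used to decide whether a vertex $v$ can be merged into a non-free robot $r$, namely $d(v,x_{\dep}(r))\leq k\cdot 2^{j'}L$ where $j'$ is chosen so that $x_{\dep}(r)\in W_{j'}$, is equivalent to $d(v,x_{\dep}(r))\leq kL/w_{x_{\dep}(r)}$, since $w_{x_{\dep}(r)}=1/2^{j'}$. Consequently, any vertex that survives the absorption step of a non-free robot $r$ lies at distance strictly greater than $kL/w_{x_{\dep}(r)}$ from $x_{\dep}(r)$. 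I will apply this fact at the moment the later of the two depots $s_i,s_j$ is installed.

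Write $w_i=2^{-j_i}$ and $w_j=2^{-j_j}$; the hypothesis $w_i\geq w_j$ gives $j_i\leq j_j$. I split the argument according to whether $j_i<j_j$ or $j_i=j_j$. In the first case, the outer loop processes $W_{j_i}$ strictly before $W_{j_j}$, so when the tree in $\T(W_{j_j})$ containing $s_j$ is processed, the robot carrying depot $s_i$ is already non-free. Since $s_j$ is subsequently picked as the new depot, $s_j$ must lie in the residual $Q$ after all absorption attempts for that tree, meaning in particular that $s_j$ was not absorbed by the robot with depot $s_i$, and therefore $d(s_i,s_j)>kL/w_i$.

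In the second case $j_i=j_j$, both depots arise within a single outer-loop iteration, from two different trees of $\T(W_{j_i})$; they cannot come from the same tree, because the inner loop creates at most one new depot tree per tree processed. Depending on which of these two trees is handled first, either the argument above applies directly and yields $d(s_i,s_j)>kL/w_i$, or the symmetric argument based on the absorption test of the robot with depot $s_j$ yields $d(s_i,s_j)>kL/w_j$; since $w_i=w_j$ in this subcase, both inequalities coincide with the desired bound.

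The step I expect to need the most care is confirming (i) that $s_i$ and $s_j$ really are present in the appropriate residual sets at the decisive absorption tests, and (ii) that in the equal-weight subcase the two depots must come from distinct trees of $\T(W_{j_i})$. Both points follow from a short inspection of the inner loop: a new depot tree is formed only from the residual vertices of a single tree of the current $\T(W_j)$ after all absorption attempts, and only one such depot tree can be produced during the inner-loop iteration devoted to that tree. Once these structural facts are in hand, the case analysis above produces the strict inequality in the conclusion.
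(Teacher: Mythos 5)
Your proof is correct and follows essentially the same route as the paper's: the paper's own two-sentence argument observes that depots are created in non-increasing order of weight and that a later depot must have failed the absorption test of every earlier one, which is exactly the threshold identity $k2^{j'}L = kL/w_{x_{\dep}(r)}$ you exploit. Your version merely fills in the details the paper leaves implicit, in particular the careful handling of the equal-weight case where both depots arise in the same outer-loop iteration.
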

\begin{proof}
The depot vertices, in the order of their creation, have non-increasing weight. Thus, we could assume without loss of generality that $s_j$ is the depot that is created later than $s_i$. $s_j$ is more than $k L/w_i$ away from the depot $s_i$.
\end{proof}

\begin{lemma}\label{lem:lowerbound-opt}
Let $s_0,\cdots,s_k$ be $k+1$ depot sites, ordered such that $w_0 \geq\cdots \geq w_k$, defined as in Algorithm \textsc{$k$-robot assignment}($\W,L$). The optimal schedule minimizing the maximum weighted latency for $k$ robots to serve $\{s_0,\cdots,s_k\}$ has weighted latency $L^* \geq  2L$.
\end{lemma}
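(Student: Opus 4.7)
The plan is to derive the lower bound $L^*\geq 2L$ by a travel-time accounting argument that looks only at the single highest-weight depot $s_0$. The key ingredient is that $d(s_0,s_j) > kL/w_0$ for every other depot $s_j$, which is exactly what Lemma~\ref{lem:far-away-different-depot-trees} gives us (since $w_0 = \max_i w_i$, every other depot plays the role of the lighter-weight vertex in that lemma).

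First I would replace the given optimal schedule with a \emph{canonical} one, meaning a schedule in which no robot ever leaves a depot and returns to the same depot without visiting some other site in between. Any such ``wasteful'' round trip can be replaced by having the robot simply wait at the depot for the duration of the trip: this merges two bracketing episodes at that depot into one, does not alter the set of visits the robot makes to any other site, and therefore cannot increase the latency at any site. So the optimum is attained by a canonical schedule, and I can work with one.

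The main step is a double-counting on a large horizon $[0,T]$. On one side, the latency constraint at $s_0$ forces the total number $n_0$ of visit episodes at $s_0$, summed across all $k$ robots, to grow at least at rate $w_0/L^*$. On the other side, in a canonical schedule every $s_0$-episode except for at most $2k$ boundary episodes (the first and last episode of each robot, if it happens to be at $s_0$) is sandwiched between an arrival travel from some other depot and a departure travel to some other depot; by the distance bound each such travel consumes strictly more than $kL/w_0$ time units. Summing, the total $s_0$-adjacent travel time is strictly larger than roughly $2n_0 \cdot (kL/w_0)$, and this must fit inside the aggregate travel budget of $kT$ over all $k$ robots. Plugging in $n_0 \gtrsim Tw_0/L^*$ and letting $T \to \infty$ absorbs the $O(k)$ boundary terms and yields $L^* \geq 2L$.

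The hard part here is really only the canonicalization step: one has to be sure that modifying the schedule does not accidentally damage some other site's latency, which is why the ``wasteful'' round trips are exactly the trips to eliminate. Once that is in hand the rest is a direct double-counting, and in particular it is notable that the argument is essentially one-dimensional: it only uses the latency of $s_0$ and the distances from $s_0$ to other depots, not the full geometry of the $(k{+}1)$-depot configuration.
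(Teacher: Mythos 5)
Your double-counting hinges on the claim that the latency constraint at $s_0$ forces the number $n_0$ of per-robot visit episodes at $s_0$ in $[0,T]$ to grow at rate at least $w_0/L^*$. That step fails. The latency of $s_0$ only bounds the length of each maximal \emph{unoccupied} interval at $s_0$; it does not lower-bound the number of episodes. A robot may park at $s_0$ forever (or wait there for arbitrarily long stretches between trips): then $s_0$ has latency $0$, $n_0$ stays bounded or grows arbitrarily slowly, and the inequality $2 n_0 \cdot kL/w_0 \leq kT$ yields nothing. This is not a corner case: with $k$ robots and $k+1$ pairwise-distant depots, stationing a robot at the heaviest site is a perfectly sensible strategy, and then the entire burden of the lower bound falls on the remaining $k$ sites served by $k-1$ robots. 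Your own closing remark --- that the argument ``only uses the latency of $s_0$'' --- is precisely the symptom: no bound that examines a single site can be correct, because one robot can always drive that site's latency to zero. Your canonicalization step is fine as far as it goes, but it removes only round trips, not long waiting episodes, so it does not address this.

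The paper's proof is built exactly around this difficulty. It proceeds by induction over rounds $p=1,\dots,k-1$: in round $p$ it modifies the schedule so that a robot sits permanently at $s_{p-1}$, and to avoid increasing any latency it transfers the freed robot's speed to another robot, maintaining the invariant $\sum_{r}\speed(r,t)\leq k$. After $k-1$ rounds a single robot of speed at most $k$ is left to shuttle between $s_{k-1}$ and $s_k$, which are more than $kL/w_{k-1}$ apart, giving weighted latency at least $w_{k-1}\cdot 2kL/(k\,w_{k-1}) = 2L$. Any averaging argument that works would have to account for occupied and travel time at \emph{all} $k+1$ sites simultaneously (the pigeonhole being that $k$ robots cannot sustain long occupied episodes at $k+1$ mutually far sites), which essentially reproduces the paper's induction; restricting attention to $s_0$ cannot succeed.
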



\begin{proof}
Let $\speed(r,t)$ denote the speed of a robot $r$ at time $t$. Let $S$ be a schedule of latency $L^*$. The proof proceeds in $k$ rounds. The goal of the $p$-th round is to change the schedule into a new schedule that has a stationary robot at site $s_{p-1}$. To keep the latency at $L^*$, we will increase the speed of some other robots. We will show the following claim.
\begin{quotation}
\noindent \emph{Claim.} After the $p$-th round we have a schedule of latency $L^*$ such that 
\begin{enumerate}
    \item there is a stationary robot at each of the sites $s_i$ with $i<p$,
    \item at any time $t$ we have $\sum_{r} \speed(r, t) \leq  k$, where the sum is overall $k$ robots.
\end{enumerate}
\end{quotation}
This claim implies that after the $(k-1)$-th round we have a schedule of latency $L^*$ 
with stationary robots at $s_0, s_1, \cdots, s_{k-2}$, and one robot of maximum speed~$k$ 
serving the sites $s_{k-1}$ and~$s_k$. The distance between these sites is at least~$k L/w_{k-1}$,
so the latency $L^*$ of our modified schedule satisfies $L^* \geq 2k L/k=2L$. This is what is needed in the Lemma. 

The proof of the claim is by induction. Suppose the claim holds after the $(p-1)$-th round. Thus we have a stationary robot at each of the sites $s_0,\cdots,s_{p-2}$, and at any time $t$ we have $\sum_{r} \speed(r, t) \leq k$. Note that for $p=1$, the required conditions are indeed satisfied. Now consider the site $s_{p-1}$. 

Define $\ell_0, \ell_1, \cdots$ to be the moments in time where there is at least one robot at $s_{p-1}$ and all robots present at $s_{p-1}$ are leaving. In other words, $\ell_0, \ell_1, \cdots$ are the times at which $s_{p-1}$ is about to become unoccupied. If no such time exists then there is always a robot at $s_{p-1}$, and so we are done. 
Let $a_1, a_2, \cdots$ be the moments in time where a robot arrives at $s_{p-1}$ while no other robot was present at $s_{p-1}$ just before that time, that is, $s_{p-1}$ becomes occupied. Assuming without loss
of generality that $\ell_0 <a_1$, we have 
\[ 
\ell_0\leq a_1\leq \ell_1 \leq \cdots.
\]
Consider an interval $(\ell_i,a_{i+1})$. By definition $a_{i+1}-\ell_i \leq L^*/w_{p-1}$. 
Let $r$ be a robot leaving $s_{p-1}$ at time $\ell_i$ and suppose $r$ is at position $z$ at time $a_{i+1}$. Let $r'$ be a robot arriving at $s_{p-1}$ at time $a_i$. We modify the schedule such that $r$ stays stationary at $s_{p-1}$, while $r'$ travels to $z$ via $s_{p-1}$. We increase the speed of $r'$ by adding the speed of $r$ to it, that is, for any $t \in (\ell_i,a_{i+1})$ we change the  speed of $r'$ at time $t$ to $\speed(r',t) + \speed(r,t)$. Since $r$ is now stationary at $s_{p-1}$, this does not increase the sum of the robot speeds. Moreover, with this new speed, $r'$ will reach $z$ at time $a_{i+1}$. Finally, observe that this modification does not increase the latency.  Indeed, the sites $s_0, \cdots, s_{p-2}$ have a stationary robot by the induction hypothesis, and all sites $s_{p},\cdots ,s_{k}$ are at distance at least $k L/w_{p-1}$ from $s_{p-1}$ so during $(\ell_i,a_{i+1})$  the robots $r$ and $r'$ did not visit any of these sites in the unmodified schedule.  
\qed
\end{proof}

\noindent\fbox{\scalebox{0.97}{\begin{minipage}{\textwidth}
{\sc Single-Robot-Schedule$(\T=\{T_0, T_1, \cdots, T_{h-1}\})$} \Comment{$T_0$ is the depot tree and $w_0$ is the weight of the vertices in $T_0$. $h\leq km$}
\begin{algorithmic}[1]
%
    \State $\delta \leftarrow 2 k L/w_0$. 
    \For{$i \leftarrow$ 0 to $h-1$}
        \State Compute a tour $D_i$ of length at most $2|T_i|$ on the vertices in $T_i$. 
        \State \parbox[t]{90mm}{Partition $D_i$ into a collection $\mP^i=\{P^i_0, P^i_1, \cdots \}$ of at most $\ceil{2|T_i|/\delta}$ paths such that $|P^i_j| \leq \delta$ for all $j$.}
    \State $\idx(i) \leftarrow 0$ \Comment{$P^i_{\idx(i)}$ is the path in $\mP^i$ to be traversed next}
    \EndFor
    \State Put the robot on the first vertex of path $P^0_0$ and set $i \leftarrow 0$
    \While{\textsc{True}}
        \State Let the robot traverse path $P^i_{\idx(i)}$
        \State $i' \leftarrow (i+1) \mod h$    
        \State Let the robot move from the end of $P^i_{\idx(i)}$ to the start of $P^{i'}_{\idx(i')}$
        \State Set $\idx(i) \leftarrow (\idx(i)+1) \mod |\mP_i|$ and set $i \leftarrow i'$
    \EndWhile
\end{algorithmic}
\end{minipage}}}

The proofs for the following two Lemmas can be found in \ifFullVersion
    the appendix.
\else 
    \cite{afshani2020approximation}.
\fi

\begin{lemma}
\label{lem: L leq L_star}
Given $L$, if \textsc{$k$-robot schedule}($\W,L$) returns \textsc{False} then $L^*\geq L$, where $L^*$ is the optimal maximum weighted latency.
\end{lemma}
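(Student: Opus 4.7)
The plan is to prove the contrapositive: if \textsc{$k$-robot assignment}$(\W,L)$ returns \textsc{False}, then $L^*\geq L$. The algorithm returns \textsc{False} in exactly two places: at line~\ref{algo:k-robot-schedule_first return false}, when for some class $W_j$ no $t\leq k$ yields $\mathcal{C}^j_t$ with max tree weight strictly less than $\beta\cdot 2^{j}L$; and at the later ``no free robot'' check, when a vertex in the residue $Q$ of some tree $T\in\T(W_j)$ cannot be absorbed by any of the $k$ existing depots. I would handle the two cases independently.

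For the first failure, I would combine the $\beta$-approximation guarantee of $\mathcal{A}$ with a standard trajectory-to-tree-cover argument. If even $\mathcal{C}^j_k$ has some tree of weight $\geq \beta\cdot 2^{j}L$, then the \emph{optimal} $k$-min-max tree cover of $W_j$ has max tree weight at least $2^{j}L$. On the other hand, in any schedule of weighted latency $L^*$, every site $s\in W_j$ (weight $2^{-j}$) has unweighted latency at most $L^*\cdot 2^{j}$; hence inside any time interval of length slightly greater than $L^*\cdot 2^{j}$ every such site is visited. Partitioning $W_j$ by which robot visits a given site first in such an interval yields a $k$-tree cover whose trees are spanned by the corresponding robot trajectories of length at most $L^*\cdot 2^{j}$. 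Consequently, the optimal $k$-min-max tree cover of $W_j$ has max weight at most $L^*\cdot 2^{j}$, and comparing with the lower bound gives $L^*\geq L$.

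For the second failure, I would apply Lemma~\ref{lem:lowerbound-opt} to an enlarged set of $k+1$ effective depots. At the failure moment all $k$ robots carry depots $s_0,\dots,s_{k-1}$, created in the outer loop in non-increasing weight order. Pick any leftover vertex $v\in Q$ and set $s_k:=v$. The pairs inside $\{s_0,\dots,s_{k-1}\}$ satisfy $d(s_a,s_b)>kL/w_a$ for $a<b$ by Lemma~\ref{lem:far-away-different-depot-trees}. The new pairs $(s_a,v)$ also satisfy this inequality: at line~\ref{algo:k-robot-schedule_add tree into non-free robot} the set $Q'$ would have absorbed $v$ had $d(v,x_{\dep}(r_a))\leq k\cdot 2^{j_a}L=kL/w_a$. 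Finally, since the outer loop processes weight classes in decreasing weight order, $v\in W_j$ implies $w_k\leq w_{k-1}$. Thus $s_0,\dots,s_k$ meet exactly the distance hypothesis used inside the proof of Lemma~\ref{lem:lowerbound-opt}, and that lemma yields weighted latency at least $2L$ on the subinstance $\{s_0,\dots,s_k\}$; since restricting to fewer sites can only lower the optimum, $L^*\geq 2L\geq L$ on the original instance.

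The main obstacle I foresee is the second case: one must verify that the separation invoked inside Lemma~\ref{lem:lowerbound-opt}, stated there for actual depots produced by the algorithm, still holds when $v$ plays the role of a hypothetical $(k+1)$-st depot, and that the inductive argument inside Lemma~\ref{lem:lowerbound-opt} does not otherwise depend on $v$ being an official depot with a nonempty depot tree. The first case is more routine; its one subtlety is the observation that the set of $W_j$-vertices visited by a unit-speed robot during a time window of length $T$ lies on a sub-curve of length at most $T$ and therefore admits a spanning tree of weight at most $T$.
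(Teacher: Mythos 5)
Your proof is correct and follows essentially the same route as the paper's: the first failure case is handled by combining the $\beta$-approximation guarantee with the trajectory-to-tree-cover lower bound, and the second by applying Lemma~\ref{lem:lowerbound-opt} to the $k$ existing depots together with one unabsorbed vertex, whose required pairwise separation you verify exactly as needed. The only difference is cosmetic: you consistently use the threshold $\beta\cdot 2^{j}L$ from the pseudocode, whereas the paper's write-up of this case uses $\beta\cdot 2^{j-1}L$.
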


\begin{lemma}
\label{lem:short-distance}
If \textsc{$k$-robot schedule}($\W,L$) does not return \textsc{False}, each robot is assigned at most $k(m+1)$ trees and a depot site such that 
\begin{itemize}
    \item one of the trees is the depot tree $T_{\dep}$ which includes a depot $x_{\dep}$. $x_{\dep}$ has the highest weight among all sites assigned to this robot;
    \item all other vertices are within distance $k L/\Bar{w}$ from the depot, where $\Bar{w}$ is the weight of $x_{\dep}$;
    \item each tree $T$ has vertices of the same weight $w$ and the sum of tree edge length is at most $\beta L/w$. 
\end{itemize}
\end{lemma}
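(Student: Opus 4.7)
The plan is to verify each of the three bullets directly from the structure of Algorithm \textsc{$k$-robot assignment} and the invariants it maintains, under the assumption that the algorithm does not return \textsc{False}. The argument is essentially a bookkeeping exercise over the two nested loops, together with one standard observation about MSTs of subsets of a tree.

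For the count of $k(m+1)$ trees, I would note that the outer loop iterates over at most $m+1$ non-empty weight classes~$W_j$, and within each such class the inner loop processes the $q_j \le k$ trees of $\T(W_j)$. Each inner iteration contributes at most one tree to any fixed robot~$r$: either the single subtree $\MST(Q')$ carved off for a non-free robot, or, for the (at most one) free robot promoted in that iteration, the single depot tree $\MST(Q)$. Summing gives at most $k(m+1)$ trees per robot. The first bullet (depot has highest weight) holds because depots are created only when a free robot is promoted, which happens inside the outer-loop iteration for some $W_{j_0}$; thus $x_{\dep}(r)\in W_{j_0}$, and since the outer loop proceeds in decreasing order of weight, every subsequent vertex assigned to~$r$ has weight at most $1/2^{j_0}$.

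For the second bullet, the non-depot-tree vertices are handled immediately: by construction of $Q'$, every $v$ assigned to a non-free robot~$r$ with depot weight $\bar w = 1/2^{j'}$ satisfies $d(v, x_{\dep}(r)) \le k\, 2^{j'} L = kL/\bar w$. For a vertex $v$ that lies in the depot tree of~$r$, I would instead use that $v\in V(T)$ for the single tree $T\in\T(W_{j_0})$ from which $T_{\dep}(r)$ was carved, so $d(v, x_{\dep}(r))$ is bounded by the $T$-path distance, which is at most $|T| < \beta L/\bar w \le kL/\bar w$ in the regime $k\ge\beta$ of interest. For the third bullet, every tree assigned to $r$ is the metric $\MST$ of some $Q'\subseteq V(T)$ for a $T\in \T(W_j)$, so all of its vertices share the common weight $w = 1/2^j$, and the choice of $q_j$ in the algorithm guarantees $|T| < \beta\cdot 2^j L = \beta L/w$.

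The one mild obstacle is comparing $|\MST(Q')|$ to $|T|$: in general the metric $\MST$ on a proper subset $Q'\subseteq V(T)$ can strictly exceed $|T|$, for instance when $T$ is a star and $Q'$ is its set of leaves. The standard fix is to view the minimal subtree of $T$ spanning $Q'$ as a Steiner tree of length at most~$|T|$ and then double-and-shortcut a DFS traversal of it, which yields $|\MST(Q')|\le 2|T| < 2\beta L/w$. I expect this factor of $2$ (and the analogous slack in the depot-tree diameter bound) to be absorbed into the leading approximation constant in the form in which the lemma is stated.
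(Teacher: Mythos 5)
Your proposal follows essentially the same bookkeeping argument as the paper's proof: the $k(m+1)$ count comes from the $(m+1)\times k$ nested loop structure with at most one tree per robot per inner iteration, the weight-monotonicity of depots comes from processing the classes $W_j$ in decreasing order of weight, and the distance and length bounds come from the definition of $Q'$ and the choice of $q_j$. Your two points of extra care are both warranted, and in fact the first one catches a genuine flaw in the paper's own proof: the paper asserts that the $\MST$ of a subset $Q'\subseteq V(T')$ has length no greater than $|T'|$, which is false in a general metric (your star example is exactly the counterexample), so the third bullet should really read $2\beta L/w$ after the double-and-shortcut fix; this only perturbs constants in Lemma~\ref{lem:latency} and the final $O(k^2 m)$ ratio. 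Your handling of the depot-tree vertices in the second bullet also fills a step the paper leaves implicit (Line 15 only constrains vertices assigned to already non-free robots), though you should note the residual caveat that $kL/\Bar{w}$ dominates $\beta L/\Bar{w}$ only for $k\geq 3$ when $\beta=8/3$ --- again a constant-factor issue absorbed by the big-$O$, but worth stating since the lemma is quoted with explicit constants downstream.
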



Now we are ready to present the algorithm for finding the schedule for robot $r_i$ to cover all vertices in the family of trees $\T(r_i)$, as the output of \textsc{$k$-robot schedule}($\W,L$). We apply the algorithm in~\cite{lingasbamboo,alamdari2014persistent} for the patrol problem with one robot, with the only one difference of handling the sites of small weights. The details are presented in the pseudo code \textsc{Single Robot schedule}($\T$) which takes a set $\T$ of $h$ trees. By Lemma~\ref{lem:short-distance}, there are at most $km$ trees assigned to one robot, i.e., $h \leq km$. For a tree $T$ (a path $P$) we use $|T|$ (resp. $|P|$) as the sum of the length of edges in $T$ (resp. $P$). 

\begin{lemma}\label{lem:latency}
The \textsc{Single Robot Schedule}($\T=\{T_0, T_2, \cdots, T_{h-1}\}$), $h\leq k(m+1)$, returns a schedule for one robot that covers all sites included in $\T$ such that the maximum weighted latency of the schedule is at most $O(\factor \cdot L)$.
\end{lemma}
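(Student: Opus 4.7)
The plan is to analyze the round-robin schedule produced by \textsc{Single-Robot-Schedule} by (i) bounding the duration of one full cycle through all $h$ trees, (ii) bounding the number of cycles between two consecutive visits to any vertex, and (iii) multiplying the two quantities and then by the vertex weight. Throughout, let $w_0$ denote the weight of the vertices in the depot tree $T_0$, and recall $\delta = 2kL/w_0$. By Lemma~\ref{lem:short-distance}, every vertex appearing in any tree of $\T$ lies at distance at most $kL/w_0$ from the depot $x_{\dep}$, so by the triangle inequality any two such vertices are within $\delta$ of each other.

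\textbf{Length of one super-round.} A ``super-round'' consists of traversing one path $P^i_{\idx(i)}$ from each of the $h$ trees, together with the transitions between them. Each path has length at most $\delta$ by construction. Each transition goes from an endpoint of one path to an endpoint of the next; both endpoints are vertices of trees in $\T$, and I would route the triangle-inequality argument through $x_{\dep}$ to show each transition has length at most $2kL/w_0 = \delta$. Hence a super-round has duration at most $2h\delta = O(hkL/w_0)$.

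\textbf{Number of super-rounds between visits.} For tree $T_i$ with vertex weight $w_i$, Lemma~\ref{lem:short-distance} gives $|T_i| \leq \beta L/w_i$, and the tour $D_i$ has length at most $2|T_i|$. Therefore
\[
|\mathcal{P}^i| \;\leq\; \left\lceil \frac{2|T_i|}{\delta} \right\rceil \;\leq\; \left\lceil \frac{\beta w_0}{k\, w_i} \right\rceil \;\leq\; 1 + \frac{\beta w_0}{k\, w_i}.
\]
Since $D_i$ covers every vertex of $T_i$ and is partitioned into the paths of $\mathcal{P}^i$, each vertex of $T_i$ lies on at least one path and is therefore visited at least once per $|\mathcal{P}^i|$ super-rounds.

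\textbf{Combining and weighting.} For a vertex $v \in T_i$ the latency is at most
\[
|\mathcal{P}^i| \cdot 2h\delta \;\leq\; \Big(1 + \tfrac{\beta w_0}{k w_i}\Big) \cdot \tfrac{4hkL}{w_0} \;=\; \tfrac{4hkL}{w_0} + \tfrac{4h\beta L}{w_i}.
\]
Using $w_i \leq w_0$ to absorb the first term into the second (up to a constant involving $\beta$), and recalling $h \leq k(m+1)$ and $\beta = O(1)$, the weighted latency $w_i L_v$ is bounded by $O(h k L) = O(k^2 m L)$, as required. The main technical point is getting the transition-length bound cleanly via the depot, together with keeping track of the two regimes $w_i \approx w_0$ and $w_i \ll w_0$ when bounding $|\mathcal{P}^i|$; everything else is routine accounting against the round-robin structure of the schedule.
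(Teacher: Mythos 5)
Your proposal is correct and follows essentially the same route as the paper's own proof: bound any pairwise distance by $\delta$ via the depot, observe that a vertex in $T_i$ is revisited after at most $h\cdot|\mathcal{P}^i|$ iterations each costing at most $2\delta$, and then use $|T_i|\leq \beta L/w_i$ to bound $|\mathcal{P}^i|$. The only cosmetic difference is that you handle the ceiling with the bound $\lceil x\rceil\leq 1+x$ in a single computation, whereas the paper splits into the cases $|T_i|>\delta$ and $|T_i|\leq\delta$; both yield the same $O(hkL)=O(k^2mL)$ weighted latency.
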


To analyze the running time, we use the best known $t$-min-max tree cover algorithm~\cite{xu2013approximation} with running time $O(n^2 t^2 \log n+t^5 \log n)$. In Algorithm~\textsc{$k$-robot Assignment}, from line 2 to line 8 it takes time in the order of $O(m n^2 \log n) \cdot (1^2+2^2+ \cdots k^2)= O(m n^2 k^3 \log n)$ (suppose $n\gg k$). From line 9 to line 24, we assign some subset of vertices $Q'$ in each tree to occupied robots. The running time is $O(k(m+1) \cdot n\log n)$, where $O(n \log n)$ is the time to compute the minimum spanning tree for $Q'$ (line 16). The total running time is $O(m n^2 \log n)$ for Algorithm~\textsc{$k$-robot Assignment}. Algorithm~\textsc{Single Robot Schedule} takes $O(n)$ time, since a robot is assigned at most $n$ sites. Thus, given a value $L$, it takes $O( n^2 k^3 m\log n)$ to either generate patrol schedules for $k$ robots with approximation factor $O(\factor)$ or confirm that there is no schedule with maximum weighted latency $L$. 

To solve the optimization problem (i.e., finding the minimum $L^*$) if there are fewer than than $k$ sites,  we put one robot per site. Otherwise, we start with parameter $L$ taking the distance between the closest pair of the $n$ sites, and double $L$ whenever the decision problem answers negatively. The number of iterations is bounded by $\log L^*$. Notice that $L^*$ is bounded, e.g., at most $1/k$-th of the traveling salesman tour length.

\begin{theorem}
The approximation algorithm for $k$-robot patrol scheduling for weighted sites in the general metric has running time $O(n^2k^3m\log n \log L^*)$ with a $O(\factor)$-approximation ratio, where $m=\log \frac{w_{\max}}{w_{\min}}$ with $w_{\max}$ and $w_{\min}$ being the maximum and minimum weight of the sites and $L^*$ is the optimal maximum weighted latency.
\end{theorem}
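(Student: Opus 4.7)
The plan is to combine the three lemmas already established (Lemmas~\ref{lem: L leq L_star}, \ref{lem:short-distance}, and~\ref{lem:latency}) with a doubling search on the parameter $L$, and then separately verify the running time by multiplying the per-iteration cost by the number of doubling steps. The statement is really a wrap-up, so I would not introduce new machinery; the work lies in carefully matching the approximation factor across the two sub-algorithms and in counting the iterations.

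First I would handle the approximation ratio. Starting from a small initial value of $L$ (for instance, the minimum pairwise distance divided by a suitable constant, which is a trivial lower bound on $L^*$ when there are at least two sites of nonzero weight), I run \textsc{$k$-robot assignment}$(\W,L)$ followed by \textsc{Single-Robot-Schedule} for each robot. If the assignment step returns \textsc{False}, I double $L$ and retry. By Lemma~\ref{lem: L leq L_star}, every failure certifies $L \leq L^*$, so when the first success occurs the current $L$ satisfies $L \leq 2L^*$. Lemma~\ref{lem:short-distance} guarantees that each robot then receives a well-structured family of at most $k(m+1)$ trees, and Lemma~\ref{lem:latency} gives that the per-robot schedule attains weighted latency $O(k^2 m \cdot L) = O(k^2 m \cdot L^*)$, which is the claimed $O(k^2 m)$-approximation. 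The small caveat I would address is that the extra factor of $2$ from dyadic rounding at the very beginning is absorbed into the $O(\cdot)$, and that the case of fewer than $k$ sites is trivial (one robot per site, zero latency).

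Next I would bound the running time of a single decision-problem call. Inside \textsc{$k$-robot assignment}, the inner loop runs the $\beta$-approximation for $t$-min-max tree cover of~\cite{xu2013approximation} for $t=1,\dots,k$ on each of the $m+1$ weight classes; using the stated $O(n^2 t^2 \log n + t^5 \log n)$ bound and summing $t^2$ up to $k$ yields $O(m n^2 k^3 \log n)$ (assuming $n \gg k$). The subsequent tree-assignment phase takes $O(k m \cdot n \log n)$ because each robot is considered once per tree and MST computations dominate. Finally \textsc{Single-Robot-Schedule} is linear in the total number of sites. So one decision problem costs $O(n^2 k^3 m \log n)$.

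Last I would count the iterations of the doubling search. Since $L^*$ is at most $1/k$ times the TSP tour length (times $w_{\max}$), and the starting $L$ can be chosen at least a polynomial factor below any feasible value, the number of doublings is $O(\log L^*)$. Multiplying, the overall running time becomes $O(n^2 k^3 m \log n \log L^*)$, completing the bound. The only delicate point I anticipate is being precise about the dyadic rounding and the strict-versus-nonstrict inequality when transitioning from the last failing $L$ to the first succeeding one, so that the ``$\leq 2L^*$'' estimate is airtight; this is purely bookkeeping and does not require new ideas.
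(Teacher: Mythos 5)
Your proposal is correct and follows essentially the same route as the paper: a doubling search on $L$ starting from the closest-pair distance, with Lemma~\ref{lem: L leq L_star} certifying $L\leq 2L^*$ at the first success, Lemmas~\ref{lem:short-distance} and~\ref{lem:latency} giving the $O(k^2m)$ factor, and the per-iteration cost $O(n^2k^3m\log n)$ obtained by summing the $t$-min-max tree cover calls exactly as the paper does. The only (shared) imprecision is writing the number of doublings as $O(\log L^*)$ rather than $O(\log(L^*/L_0))$ with $L_0$ the closest-pair distance, which the paper glosses over in the same way.
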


\section{Sites in $\Reals^1$ }\label{sec:1D}
In this section we consider the case where the sites are points in~$\Reals^1$. 
We start with a simple observation about the case of a single robot. After that we turn our attention to the more interesting case of multiple robots.

We define the schedule of a robot in $\Reals^1$ to be a \emph{zigzag schedule}, or \emph{zigzag} for short, if the robot moves back and forth along an interval at maximum speed (and only turns at the endpoints of the interval). 
\begin{observation}\label{obs:1d-k1}
Let $P$ be a collection of $n$ sites in $\Reals^1$ with arbitrary weights. Then the zigzag schedule where a robot travels back and forth between the leftmost and the rightmost site in $P$ is optimal for a single robot.
\end{observation}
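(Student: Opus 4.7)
The plan is to prove optimality by establishing a per-site latency lower bound that the zigzag attains exactly. Let $s_L$ and $s_R$ denote the leftmost and rightmost sites. A direct timing calculation shows that a robot bouncing at unit speed between $s_L$ and $s_R$ visits the site at position $x_i$ with alternating inter-visit gaps $2(s_R - x_i)$ and $2(x_i - s_L)$, so the zigzag latency at $s_i$ is exactly $2\max(d(s_i, s_L), d(s_i, s_R))$, and hence the zigzag's weighted latency is $\max_i 2 w_i \max(d(s_i, s_L), d(s_i, s_R))$.

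The key step is the following per-site lower bound: in any schedule with finite weighted latency, $L_i \geq 2 d(s_i, s_L)$ and $L_i \geq 2 d(s_i, s_R)$ for every site $s_i$. I would prove the $s_R$ inequality first and take the $s_L$ case by symmetry; the degenerate cases $s_i = s_L$ or $s_i = s_R$ are then handled by the nontrivial bound coming from the other endpoint. Any finite-latency schedule must visit each site, in particular $s_R$, infinitely often. Pick any time $t$ after the first visit to $s_i$ at which the robot sits at $s_R$; since $s_i \neq s_R$, $t$ lies strictly inside a maximal open interval $(t^i_j, t^i_{j+1})$ on which the robot is not at $x_i$. Within this interval the robot departs from $x_i$, reaches $x_R$, and returns to $x_i$, so the unit-speed constraint forces its length to be at least $2 d(s_i, s_R)$, and the definition of $L_i$ as a supremum over such lengths yields $L_i \geq 2 d(s_i, s_R)$.

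Combining the two inequalities gives $\max_i w_i L_i \geq \max_i 2 w_i \max(d(s_i, s_L), d(s_i, s_R))$ for every finite-latency schedule, which matches the zigzag's weighted latency computed above and hence proves optimality. The main subtlety to watch is that a robot may pause at a site, so visits must be treated as time intervals rather than instants; the paper's definition of $L_i$ already uses the maximal-absence formulation, so no additional care is needed. It is worth noting that the weights play no role in the lower-bound argument, which is what permits a single weight-oblivious schedule (the full zigzag) to be simultaneously optimal at every site regardless of how the weights are distributed.
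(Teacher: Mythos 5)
Your proof is correct and takes essentially the same route as the paper: both rest on the observation that any inter-visit gap at $s_i$ during which the robot reaches an endpoint must last at least twice the distance to that endpoint, which matches the zigzag's latency exactly. The only difference is presentational --- the paper applies this bound only to the bottleneck site via a two-site subinstance (merely asserting the round-trip lower bound), whereas you prove the per-site bound uniformly and spell out the maximal-absence-interval argument, which is if anything slightly more rigorous.
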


Next, for multiple
robots, as long as the sites have uniform weights,
we show there is an optimal schedule consisting of disjoint
zigzags. Both proofs are in 
\ifFullVersion
    the appendix.
\else 
    \cite{afshani2020approximation}.
\fi

\begin{theorem}
\label{thm:1D_optimal}
Let $P$ be a set of $n$ sites in $\Reals^1$, with uniform weights, and let $k$ be 
the number of available robots, where $1\leq k\leq n$. Then there exists an optimal 
schedule such that each robot follows a zigzag schedule and the intervals 
covered by these zigzag schedules are disjoint. 
\end{theorem}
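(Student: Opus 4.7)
The plan is to reduce the multi-robot problem to $k$ independent single-robot problems via two successive transformations of an arbitrary optimal schedule, and then to invoke Observation~\ref{obs:1d-k1}.

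First, I would apply a \emph{crossing-swap argument} to any optimal schedule $\sigma^*$ of latency $L^*$: whenever two robots $r_i$ and $r_{i+1}$ occupy the same position at the same moment, swap the labels attached to their subsequent trajectories. Since every site's latency depends only on the unlabeled multiset of robot positions over time, each latency is preserved. After all such swaps (formalized, if needed, as a limit argument on the sorted position functions $r_{(1)}(t)\leq\cdots\leq r_{(k)}(t)$, which are continuous in $t$), I may assume the schedule is \emph{non-crossing}: $r_1(t)\leq r_2(t)\leq\cdots\leq r_k(t)$ for all $t\geq 0$. Under non-crossing, each robot's trajectory is confined to an interval $R_j=[a_j,b_j]$ with $a_1\leq\cdots\leq a_k$ and $b_1\leq\cdots\leq b_k$, and by the intermediate value theorem the set $S_j$ of sites that $r_j$ ever visits is a contiguous block of the sorted site sequence.

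Next, the main step is to show that the blocks $S_1,\ldots,S_k$ can be taken pairwise disjoint without increasing the latency. I plan to proceed by an exchange argument: if $S_j$ and $S_{j+1}$ share a site $c$, then one can modify the schedules of $r_j$ and $r_{j+1}$ so that $r_{j+1}$ never crosses to the left of $c$, with $r_j$ absorbing the visits to $c$ that $r_{j+1}$ used to make (this is plausible because, by non-crossing, whenever $r_{j+1}$ is at $c$, robot $r_j$ is already at position~$\leq c$). Iterating over shared sites yields pairwise disjoint contiguous blocks partitioning the sites, whose convex hulls $[\min S_j,\max S_j]$ are pairwise disjoint intervals. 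For each robot $r_j$ restricted to~$S_j$, Observation~\ref{obs:1d-k1} then gives that zigzagging on $[\min S_j,\max S_j]$ is single-robot optimal with latency $2(\max S_j-\min S_j)\leq L^*$, producing the required $k$-robot schedule.

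The main obstacle is the exchange step: rigorously showing that $r_j$ can always absorb the visits to the shared site $c$ without exceeding its latency budget on the rest of $S_j$ is delicate, because the times at which $r_{j+1}$ visited $c$ may not align with $r_j$'s original trajectory. I would circumvent this by instead proving the identity $L^*=2\min_{\text{partition}}\max_j\mathrm{diam}(P_j)$ directly, where the minimum runs over partitions of the sites into $k$ contiguous groups $P_1<\cdots<P_k$. The upper bound is immediate from the zigzag construction. For the lower bound, I would argue by contradiction: if every such partition has max-diameter exceeding $L/2$, then a greedy left-to-right covering of the sites by intervals of length $L/2$ uses at least $k+1$ pieces, yielding $k+1$ seed sites at consecutive distances $>L/2$; a counting argument over a long time horizon, combined with the non-crossing structure from Stage~1, would then force more than $k$ robots to simultaneously serve these seeds, contradicting the existence of a schedule of latency $<L^*$.
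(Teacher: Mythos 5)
Your Stage~1 (the non-crossing relabeling, giving each robot a contiguous block of sites) matches the paper's first step, and your target identity $L^*=2\min_{\text{partition}}\max_j\mathrm{diam}(P_j)$ is indeed equivalent to the theorem. The gap is in the lower bound, which is the entire content of the result. Your fallback argument reduces the problem to showing that $k$ robots cannot maintain latency $\leq L$ on $k+1$ seed sites with consecutive gaps greater than $L/2$ --- but this is just the theorem again in the special case $n=k+1$, and the ``counting argument over a long time horizon'' you invoke provably cannot close it at these separation constants. Concretely: take disjoint neighborhoods of radius $L/4$ around the seeds; since each seed is visited in every window of length $L$, each neighborhood is occupied for at least a constant fraction (about $1/4$) of the time, while at any instant the $k$ robots occupy at most $k$ neighborhoods. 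The resulting inequality is roughly $(k+1)/4\leq k$, which holds for every $k\geq 1$, so no contradiction emerges. Nothing forces robots to serve the seeds \emph{simultaneously}; the whole difficulty is that they can time-share, and a density/pigeonhole argument only rules this out when the pairwise separations are on the order of $kL$ (this is exactly the regime of the paper's Lemma~\ref{lem:lowerbound-opt}, which needs distances $>kL$ to conclude latency $\geq 2L$), not $L/2$. Your Route~A (the exchange argument where $r_j$ ``absorbs'' $r_{j+1}$'s visits to a shared site) is the right kind of idea but, as you yourself note, you do not carry it out.

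What the paper does, and what your proposal is missing, is a second normalization followed by a left-to-right induction that sidesteps both of your difficulties. After making the schedule non-crossing, the paper further arranges that the leftmost sites $a_1<a_2<\cdots<a_k$ ever visited by the robots are \emph{strictly} increasing (if $a_i=a_j$ for $j>i$, then whenever $r_j$ is at $a_i$ so is $r_i$, so $r_j$'s visits there are redundant and can be deleted). Consequently $a_1$, the leftmost site of $P$, is visited by $r_1$ \emph{alone}, so its latency is governed entirely by $r_1$'s trajectory: since $r_1$ also reaches $b_1$, the gap between two consecutive visits of $r_1$ to $a_1$ straddling a visit to $b_1$ is at least $2(b_1-a_1)$, hence $L^*\geq 2(b_1-a_1)$. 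A zigzag of $r_1$ on $[a_1,b_1]$ then achieves latency $2(b_1-a_1)\leq L^*$ on every site in that range, so the remaining robots can be pushed strictly to the right and the argument recurses on $r_2,\ldots,r_k$. This single observation --- that the leftmost site has a unique server, which pins that server's latency to twice its range --- is the key step that both of your routes lack, and it is what actually delivers the lower bound $L^*\geq 2\min_{\text{partition}}\max_j\mathrm{diam}(P_j)$.
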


With Theorem~\ref{thm:1D_optimal}, the min-max latency problem reduces to the following: Given a set $S$ of $n$ numbers and a parameter $k$, compute the smallest $L$ such that $S$ can be covered by $k$ intervals of length at most $L$. When $S$ is stored in sorted order in an array, $L$ can be computed in $O(k^2 \log^2 n)$ time~\cite[Theorem 14]{abrahamsen2017range}. If $S$ is not sorted, there is a $\Omega (n \log n)$ lower bound in the algebraic computation tree model~\cite{ben1983lower}, since for $k=n-1$ element uniqueness reduces to this problem.


We now turn our attention to sites in $\Reals^1$ with arbitrary weights.
In this setting there may not exist an optimal solution that is composed of 
disjoint zigzags (see
\ifFullVersion
    the appendix  
\else 
    \cite{afshani2020approximation}
\fi 
for details), which makes it difficult to compute an optimal solution. 
Hence, we present an approximation algorithm. Let $\rho := w_{\max}/w_{\min}$
be the ratio of the largest and smallest weight of any of the sites.
Our algorithm has a $12$-approximation ratio and runs in polynomial time when $k$, the number of robots, is a constant, and $\rho$ is polynomial in~$n$. 
More precisely, the running time of the algorithm is $O((\rho n)^{O(k)})$.  

Instead of solving the $k$-robot patrol-scheduling problem directly,
our algorithm will solve a discretized version that is defined as follows.
\begin{itemize}
\item The input is a set $P=\{s_1,\ldots,s_n\}$ of sites in $\Reals^1$, each with a weight~$w_i$
      of the form $(1/2)^{\alpha(i)}$ for some non-negative integer~$\alpha(i)$ and such
      that $1=w_1\geq w_2 \geq \cdots \geq w_n$.
\item Given a value $L>0$, which we call the \emph{window length}, we say that
      a $k$-robot schedule is \emph{valid} if the following holds: each site~$s_i$
      is visited at least once during every time interval of the form $[(j-1) L/w_i, j L/w_i]$, 
      where $j$ is a positive integer. 
      The goal is to find the smallest value~$L$ that admits a valid schedule, and
      to report the corresponding schedule.
\end{itemize}
We call this problem the \emph{Time-Window Patrolling Problem}. 
The following lemma shows that its solution can be used to solve the
patrol-scheduling problem. The proof can be found in \ifFullVersion
    the appendix.  
\else 
    \cite{afshani2020approximation}.
\fi 
\begin{lemma}\label{lem-TWP}
Suppose we have a $\gamma$-approximation algorithm for the $k$-robot Time-Window 
Patrolling Problem that runs in $T(n,k,\rho)$ time. Then there is 
$4\gamma$-approximation algorithm for the $k$-robot patrol 
scheduling problem that runs in $O(n \log n + T(n,k,\rho))$ time.
\end{lemma}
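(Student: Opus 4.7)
The plan is to obtain the $k$-robot patrol-scheduling approximation by sandwiching its optimum between two closely-related quantities on a rounded instance, namely the patrol-scheduling optimum and the Time-Window Patrolling (TWP) optimum on the same instance. First I would round each input weight $w_i$ down to the nearest power of $1/2$, producing weights $w'_i$ with $w'_i \leq w_i \leq 2w'_i$. For every schedule~$\sigma$ the weighted latencies then satisfy $L^{\sigma}_{w'} \leq L^{\sigma}_{w} \leq 2 L^{\sigma}_{w'}$, so the patrol-scheduling optimum under the rounded weights, call it $L^*_{w'}$, is within a factor of two of the original optimum $L^*_w$. From here on I would work with the rounded instance, which fits the input format required by the TWP problem.

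Next come the two key comparisons between the patrol-scheduling and TWP objectives on this rounded instance. On the one hand, any patrol schedule with weighted latency at most $\Lambda$ is also a valid TWP schedule with window length $L=\Lambda$: the consecutive visits to $s_i$ are then separated by at most $\Lambda/w'_i$, and since the set of times at which $s_i$ is visited is closed (by continuity of the robot trajectories), every aligned window $[(j-1)L/w'_i,\, jL/w'_i]$ of length $\Lambda/w'_i$ must contain a visit, for otherwise two consecutive visits would straddle this window with a gap exceeding $\Lambda/w'_i$. Applied to the optimum, this gives $\mathrm{TWP}^* \leq L^*_{w'}$. On the other hand, any valid TWP schedule with window length $L$ has weighted latency (under $w'$) at most $2L$: the extreme case of two consecutive visits to $s_i$ is one at the start of window $j$ and the next at the end of window~$j+1$, giving a gap of at most $2L/w'_i$ and hence $w'_i L_i \leq 2L$.

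Composing these two bounds yields the lemma. A $\gamma$-approximate TWP schedule has window length $L \leq \gamma\,\mathrm{TWP}^* \leq \gamma L^*_{w'}$, so its weighted latency under the rounded weights is at most $2L \leq 2\gamma L^*_{w'}$, and under the original weights it is therefore at most $4\gamma L^*_{w'} \leq 4\gamma L^*_w$, as claimed. The only work outside of the TWP subroutine is rounding the weights (and, if desired, sorting the sites), giving the stated $O(n\log n + T(n,k,\rho))$ running time. There is no genuine obstacle in this argument --- it is essentially a clean sandwich --- the only small subtlety being the first window $[0, L/w'_i]$, which is handled by the fact that $L_i$ in the definition also bounds the initial visit-free interval starting at $t=0$.
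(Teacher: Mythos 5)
Your proof is correct and follows essentially the same route as the paper's: round the weights to powers of $1/2$ (you round down where the paper rounds up, which only relocates the factor of $2$), observe that a patrol schedule of latency $\Lambda$ is a valid TWP schedule with window length $\Lambda$, and that a valid TWP schedule with window length $L$ has weighted latency at most $2L$ because consecutive visits can straddle two adjacent windows. The composition of these bounds yielding $4\gamma$ matches the paper's argument.
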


\noindent{\rm\bf An algorithm for the Time-Window Patrolling Problem.}
We now describe an approximation algorithm for the Time-Window Patrolling Problem. 
To this end we define a class of so-called \emph{standard schedules}, and we show
that the best standard schedule is a good approximation to the optimal schedule.
Then we present an algorithm to compute the best standard schedule.
\smallskip

Standard schedules, for a given window length $L$, have length (that is, duration) $L/w_n$ and they are composed of $1/w_n$ so-called atomic $L$-schedules.
An \emph{atomic $L$-schedule} is a schedule~$\theta$ 
that specifies the motion of a single robot during a time interval of length~$L$. 
It is specified by a 6-tuple 
\[
(\sstart(\theta),\send(\theta),\sleft(\theta),\sright(\theta),\sbefore(\theta),\safter(\theta)),
\]
where $\sstart(\theta),\send(\theta),\sleft(\theta),\sright(\theta) \in P\cup\{\nil\}$
and $\sbefore(\theta),\safter(\theta)\in\{0,2L/3,L\}$.
Roughly speaking, $\sstart(\theta),\send(\theta),\sleft(\theta),\sright(\theta)$
denote the first, last, leftmost and rightmost site visited during the time interval,
and $\sbefore(\theta),\safter(\theta)$ indicate how long 
the robot can spend traveling 
before arriving
at $\sstart(\theta)$ resp.~ after leaving~$\send(\theta)$. Next
we define this more precisely. 

There are two types of atomic $L$-schedules. For concreteness we explain the
different types of atomic $L$-schedules for the time interval $[0,L]$,
but remember that an atomic $L$-schedule can be executed during any time
interval of length~$L$. 
\begin{description}
\item[Type~I:] $\sstart(\theta),\send(\theta),\sleft(\theta),\sright(\theta)\in P$,
      and $\sbefore(\theta)=0$ and $\safter(\theta)=2L/3$, where
      $\sleft(\theta)$ and $\sright(\theta)$ are the leftmost and rightmost 
      site among the four sites, respectively. (We allow one or more 
      of these four sites to be identical.)
      A Type~I atomic $L$-schedule specifies the following movement of the robot.
      \begin{itemize}
      \item At time $t=0$ the robot is at site $\sstart(\theta)$.
      \item At time $t=L/3$ the robot is at site $\send(\theta)$.
      \item The robot will visit $\sleft(\theta)$ and $\sright(\theta)$
            during the interval~$[0,L/3]$ using the shortest possible path, which must
            have length at most $L/3$. For example, if
            $\sleft(\theta) < \sstart(\theta) < \send(\theta) < \sright(\theta)$,
            then the robot will use the path 
            $\sstart(\theta)\rightarrow \sleft(\theta) \rightarrow \sright(\theta) \rightarrow \send(\theta)$ and we require
            $|\sstart(\theta)-\sleft(\theta)|+ |\sright(\theta)-\sleft(\theta)| + |\sright(\theta) - \send(\theta)| \leq L/3$.
      \item The robot does not visit any sites during $(L/3,L]$
            but is traveling, towards some site to be visited later. In fact, the robot may pass other sites when it is traveling during $(L/3,L]$ but these events are ignored---they are not counted as visits.
      \end{itemize}
\item[Type~II:] 
      $\sstart(\theta)=\send(\theta)=\sleft(\theta)=\sright(\theta)=\nil$,
      and $\sbefore(\theta)=0$ and $\safter(\theta)=L$. A Type~II atomic $L$-schedule
      specifies the following movement of the robot.
      \begin{itemize}
      \item The robot does not visit any sites during $[0,L]$ but is traveling. Again, the robot may pass over sites during its movement. One way to interpret Type~II atomic schedules is that the robot visits a dummy site at time $t=0$ and then spends the entire interval $(0,L]$ traveling towards some site to be visited in a later time interval. 
      \end{itemize}
\end{description}
Note that $\sbefore(\theta)=0$ in both cases. This will no longer be the case, however, when we start concatenating atomic schedules, as explained next.


Consider the concatenation of $2^h$ atomic $L$-schedules, for some $h\geq 0$, and suppose we execute this concatenated schedule during a time interval~$I$ of the form~$[(j-1)2^hL,j2^hL]$. How the robots travel exactly
during interval~$I$ is important for sites of weight more than~$1/2^h$, since such sites need to be visited multiple times. But sites of weight at most~$1/2^h$ need to be visited at most once during~$I$, and so for those sites it is sufficient to know the leftmost and rightmost visited site. Thus our algorithm will concatenate atomic $L$-schedules in a bottom-up manner. This will be done in rounds, where the $h$-th round will ensure that sites of weight~$1/2^h$ are visited. The concatenated schedule will be represented in a similar way as atomic schedules. Next we describe this in detail.

Let $\myS(h)$ denote the collection of all feasible concatenations of~$2^h$ atomic $L$-schedules. Thus $\myS(0)$ is simply the collection of all atomic schedules, and $\myS(h)$ can be obtained from $\myS(h-1)$ by combining pairs of schedules. A schedule $\theta\in\myS(h)$ will be represented by a 6-tuple
\[
(\sstart(\theta),\send(\theta),\sleft(\theta),\sright(\theta),\sbefore(\theta),\safter(\theta)).
\]
As before, $\sstart(\theta),\send(\theta),\sleft(\theta),\sright(\theta)$ denote the first, last, leftmost and rightmost site visited during the time interval. Furthermore, $\sbefore(\theta)$ indicates how much time 
the robot can spend traveling from another site before  arriving at $\sstart(\theta)$, and $\safter(\theta)$ indicates how much time the robot can spend traveling towards another site after leaving at $\send(\theta)$. The values $\sbefore(\theta)$ and $\safter(\theta)$ can now take larger values than in an atomic $L$-schedule. In particular, 
\[
\sbefore(\theta),\safter(\theta) \in \{((2/3)+i)L : 0\leq i <2^h\}\cup \{iL : 0\leq i \leq 2^h\},
\]
where $\sbefore(\theta)+\safter(\theta)\leq 2^h L$.
Note that certain values may only arise in certain situations. For example,
we can only have $\safter(\theta)=2^h L$ for a schedule that is the concatenation of~$2^h$
atomic $L$-schedules of type~II, which means that
$\sstart(\theta),\send(\theta),\sleft(\theta),\sright(\theta) =\nil$ and
$\sbefore(\theta)=0$. 

We denote the concatenation of two schedules $\theta,\theta'\in\myS(h)$
by $\theta \oplus \theta'$. The  representation of
$\theta \oplus \theta'$ can be computed from the representations of $\theta$ 
and $\theta'$:
\[
\sstart(\theta\oplus \theta') = 
     \left\{ \begin{array}{ll}
                \sstart(\theta') & \mbox{ if $\sstart(\theta) =\nil$}  \\
                \sstart(\theta) &  \mbox{ otherwise} \\
              \end{array}
      \right.
\]    
\[
\send(\theta\oplus \theta') = 
     \left\{ \begin{array}{ll}
                \send(\theta) & \mbox{ if $\send(\theta')= \nil$ }  \\
                \send(\theta') & \mbox{ otherwise} \\
              \end{array}
      \right.
\] 
Furthermore, we have
$\sleft(\theta\oplus \theta')  =   \min( \sleft(\theta),\sleft(\theta') )$
and 
$\sright(\theta\oplus \theta') =  \max( \sright(\theta),\sright(\theta') )$.
Finally,
\[
\sbefore(\theta \oplus \theta') = 
     \left\{ \begin{array}{ll}
                \safter(\theta)+ \sbefore(\theta') & \mbox{ if $\sstart(\theta)=\nil$}  \\
                \sbefore(\theta) &  \mbox{ otherwise} \\
              \end{array}
      \right.
\]
\[
\safter(\theta \oplus \theta') = 
     \left\{ \begin{array}{ll}
                \safter(\theta)+ \safter(\theta') & \mbox{ if $\sstart(\theta')=\nil$}  \\
                \safter(\theta) &  \mbox{ otherwise} \\
              \end{array}
      \right.
\]
Note that not any pair of schedules $\theta,\theta'$ can be combined: 
it needs to be possible to travel from $\send(\theta)$ to $\sstart(\theta')$
in the available time. More precisely, assuming $\send(\theta)\neq\nil$
and $\sstart(\theta')\neq\nil$---otherwise a concatenation is always possible---we
need $d(\send(\theta),\sstart(\theta')) \leq \safter(\theta) + \sbefore(\theta')$.

We now define a \emph{standard $k$-robot schedule for window length~$L$} to be a $k$-robot schedule with the following properties.
\begin{enumerate}
\item[(i)] The schedule for each robot belongs to $\myS(\log (1/w_n))$, i.e., each robot starts at a site at time $t=0$, and is the concatenation of $1/w_n$ atomic $L$-schedules.
\item[(ii)] It is a valid $k$-robot schedule for the Time-Window Patrolling Problem, for the time period $[0,L/w_n]$.
\end{enumerate}
A standard schedule $\sigma$ can be turned it into an infinite cyclic schedule,
by executing $\sigma$ and its reverse schedule~$\sigma^{-1}$ in an alternating fashion.
(In $\sigma^{-1}$ each robot simply executes its schedule in $\sigma$ backward.)
Note that~$\sigma^{-1}$ is a valid schedule since $\sigma$ is valid, and so
the schedule alternating between $\sigma$ and $\sigma^{-1}$ is valid.
The following lemma shows that
the resulting schedule is a good approximation of an optimal schedule
for the Time-Window Patrolling Problem (proof in 
\ifFullVersion
    the appendix).
\else 
    \cite{afshani2020approximation}).
\fi
\begin{lemma}\label{lem:standard-schedule}
Let $L^*$ be the minimum window length that admits a valid schedule for the
Time-Window Patrolling Problem, and let $L$ be the minimum window length
that admits a valid standard schedule. Then $L \leq 3 L^*$.
\end{lemma}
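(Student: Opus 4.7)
The plan is to take an optimal valid schedule $\sigma^*$ with window length $L^*$ and construct a valid standard schedule $\sigma$ with window length $L=3L^*$, which establishes the claim. The construction partitions the time period $[0, L/w_n]$ into atomic intervals $I_k=[kL,(k+1)L]$ of length $L=3L^*$, one per atomic $L$-schedule in $\sigma$. For each robot $r$ and each atomic index $k$, the $k$-th atomic schedule for $r$ will be built from $\sigma^*$'s behavior during a carefully chosen $L^* = L/3$ sub-interval of $I_k$. Concretely, let $A_{r,k}$ be the set of sites visited by $r$ in $\sigma^*$ during $[kL, kL+L^*]$; if $A_{r,k}=\emptyset$ the atomic is Type~II, otherwise it is Type~I with $\sstart, \send$ equal to the first and last sites visited in this sub-interval and $\sleft, \sright$ the leftmost and rightmost sites of $A_{r,k}$.

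The argument then verifies three properties. First, each Type~I atomic is feasible: in $\sigma^*$ the robot traverses a trajectory of length at most $L^*$ during an $L^*$-long sub-interval, so the shortest path from $\sstart$ through $\sleft, \sright$ to $\send$ also has length at most $L^* = L/3$, matching the Type~I budget. Second, consecutive atomics concatenate correctly: the robot must move from $\send_k$ (at $\sigma$-time $kL+L^*$) to $\sstart_{k+1}$ (at $\sigma$-time $(k+1)L$) in the $2L^* = 2L/3$ allowed, using the fact that in $\sigma^*$ the robot's positions at times $kL+L^*$ and $(k+1)L$ differ by at most $2L^*$, and $\send_k, \sstart_{k+1}$ are within $L^*$ of those positions---this may require using Type~II atomics occasionally to provide extra slack. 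Third, the time-window property holds: for each site $s_i$ and each window of length $L/w_i = 3L^*/w_i$ in $\sigma$, a visit is guaranteed because this window spans three consecutive $\sigma^*$-windows of length $L^*/w_i$ each, giving at least three visits in $\sigma^*$ that we can attempt to capture in visit portions of $\sigma$.

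The hard part of the argument is combining (ii) and (iii): the concatenation constraint limits how $\sstart, \send$ can be placed while the time-window constraint demands that we not lose too many visits, and a naive construction can leave a site with weight $w_i < 1$ un-captured if all its $\sigma^*$-visits happen to lie outside the visit portions of $\sigma$. Resolving this exploits the factor-3 slack in two ways: the visit portion of each atomic is an entire $L^*$ long (matching $\sigma^*$'s window at weight~$1$), and each $\sigma$-window contains three times the visits of the corresponding $\sigma^*$-window, giving the redundancy needed so that at least one of the three visits can always be aligned with a visit portion. Finally, the resulting standard schedule, concatenated with its reverse $\sigma^{-1}$, yields an infinite valid cyclic schedule of window length $3L^*$, completing the proof that $L \leq 3L^*$.
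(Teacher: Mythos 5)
Your construction diverges from the paper's in a way that creates a real gap. You keep the wall-clock of $\sigma^*$ fixed and build the $k$-th atomic schedule only from what $\sigma^*$ does during the sub-interval $[kL,\,kL+L^*]$, discarding the visits $\sigma^*$ makes during the remaining two thirds of $[kL,(k+1)L]$. Your own third verification step then has to argue that no site loses all of its guaranteed visits, and the redundancy argument you offer (``each $\sigma$-window contains three $\sigma^*$-windows, so at least one of the three visits can be aligned with a visit portion'') is not established and is in fact false for the construction as described. Concretely, take $w_i=1/2$: the visit portions inside one $\sigma$-window $[6kL^*,(6k+6)L^*]$ are $[6kL^*,(6k+1)L^*]$ and $[(6k+3)L^*,(6k+4)L^*]$, while the three $\sigma^*$-windows for $s_i$ are $[6kL^*,(6k+2)L^*]$, $[(6k+2)L^*,(6k+4)L^*]$, $[(6k+4)L^*,(6k+6)L^*]$; the guaranteed visits can land at, say, $(6k+1.5)L^*$, $(6k+2.5)L^*$, and $(6k+5)L^*$, all outside every visit portion (the third $\sigma^*$-window does not even intersect a visit portion except at an endpoint). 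There is no freedom to ``align'' anything, because the definition of a Type~I atomic schedule pins the visit portion to the first third of its interval ($\sbefore=0$, $\safter=2L/3$). Enlarging the sampled sub-interval to all of $[kL,(k+1)L]$ does not rescue the argument either: the shortest path from $\sstart$ through $\sleft,\sright$ to $\send$ is then only bounded by the length of $\sigma^*$'s trajectory over an interval of length $3L^*$, which need not fit in the $L/3=L^*$ budget.

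The paper's proof avoids this entirely by \emph{stretching time} rather than sampling it: each length-$L^*$ sub-schedule $\sigma_j(r_\ell)$ of the optimal schedule is replaced by one atomic $3L^*$-schedule $\theta_j(r_\ell)$ whose $\sstart,\send,\sleft,\sright$ are the first, last, leftmost, and rightmost sites visited during that $L^*$-interval (or a Type~II atomic if none). Every visit is preserved because in $\Reals^1$ traveling from $\sleft$ to $\sright$ passes over every site visited in between, the intra-atomic traversal fits in $L/3=L^*$ because $\sigma^*$ itself realized it in time $L^*$, and the inter-atomic travel is covered by the budget $2L^*+(j'-j-1)3L^*\geq (j'-j+1)L^*\geq d(\send,\sstart')$. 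Since the $j$-th $L^*$-window of $\sigma^*$ maps onto the $j$-th $3L^*$-atomic interval, a site of weight $w_i$ visited in every window of length $L^*/w_i$ is automatically visited in every window of length $3L^*/w_i$ of the new schedule. If you rework your proof along these lines --- dilating the schedule by a factor of $3$ instead of restricting to the first third of each interval --- the problematic step disappears.
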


We now present an algorithm that, given a window length~$L$, decides if a standard schedule of window length~$L$ exists. Since such a schedule is the concatenation of $1/w_n$ atomic $L$-schedules we basically generate all possible concatenated schedules iteratively from $\myS(0)$ to $\myS(\log (1/w_n))$. Recall that we need to generate a $k$-robot schedule, that is, a
collection of $k$ schedules (one for each robot). We denote by $\myS_k(h)$ the set of all $k$-robots schedules, where each of the schedules is chosen from $\myS(h)$, such that each site of weight at least~$1/2^h$ is visited at least once by one of the robots. If $\sigma=\langle \theta,\ldots,\theta_k\rangle$ and
$\sigma'=\langle \theta'_1,\ldots,\theta'_k\rangle$
are two $k$-robot schedules, then we use 
$\sigma\oplus\sigma'$ to denote the $k$-robot schedule
$\langle \theta\oplus \theta'_1,\ldots,\theta_k\oplus\theta'_k\rangle$.

Note that the concatenation of one pair of single-robot schedules may be the
same as---or, more precisely, have the same representation as---the
concatenation of a different pair of schedules. This may also result
in $k$-robot schedules that are the same.
To avoid generating too many $k$-robot schedules, our algorithm will keep only one schedule of each representation. Our algorithm is now as follows. 
\\[2mm]
\noindent\fbox{\scalebox{0.97}{\begin{minipage}{\textwidth}
{\sc Construct-Schedule$(P,L)$} 
\begin{algorithmic}[1]
  \State \label{li:myS} $\myS(0) \leftarrow$ $\{$ all possible atomic $L$-schedules  $\}$
  \State \label{li:mySk} $\myS_k(0) \leftarrow $
  \parbox[t]{9.8cm}{$\{$ all possible combinations of $k$ schedules from $\myS(0)$ such that all sites of weight~1 are visited by   at least one of the schedules  $\}$ }
  \For{$h\leftarrow 1$ to $m$}  \Comment{Recall that $w_n=1/2^m$}
    \State $\myS_k(h)\leftarrow \emptyset$
    \For{every pair of $k$-robot schedules $\sigma,\sigma' \in \myS_k(h-1)$} \label{li:for}
        \State \label{li:check} \parbox[t]{10cm}{If $\sigma$ and $\sigma'$ can be concatenated and the resulting 
               schedule visits every site of weight  $1/2^h$ at least once
               then add  $\sigma \oplus \sigma'$ to $\myS_k(h)$.}
    \EndFor
    \State Remove any duplicates from $\myS_k(h)$.
  \EndFor
\State If $\myS_k(m)\neq \emptyset$ then return {\sc yes} otherwise return {\sc no}.
\end{algorithmic}
\end{minipage}}}

The algorithm above only reports if a standard schedule of window length~$L$
exists, but it can easily be modified such that it reports such schedule
if it exists. To this end we just need to keep, for each representation in $\myS(h)$ for the current value of~$h$, an actual schedule. Doing so
will not increase the time-bound of the algorithm. The main theorem in this section is as below, with proof in  
\ifFullVersion
    the appendix).
\else 
    \cite{afshani2020approximation}).
\fi

\begin{theorem}
\label{thm:12-approx}
A 12-approximation of the min-max weighted latency for $n$ sites in $\Reals^1$ with $k$ robots, for a constant $k$, can be found in time $O(n^{8k+1} (w_{\max}/w_{\min})^{4k}\log(n\cdot w_{\max}/w_{\min}))=(n w_{\max}/w_{\min})^{O(k)}$, where
the maximum weight of any site is $w_{\max}$ and the minimum weight is $w_{\min}$.
\end{theorem}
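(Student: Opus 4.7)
The plan is to obtain the approximation ratio as a direct composition of the lemmas already proved, and then bound the running time by carefully counting the distinct representations handled by {\sc Construct-Schedule}. The approximation guarantee is immediate: Lemma~\ref{lem-TWP} reduces the weighted patrol-scheduling problem to the Time-Window Patrolling Problem at a multiplicative loss of~$4$, while Lemma~\ref{lem:standard-schedule} shows that the minimum window length $L$ admitting a valid \emph{standard} schedule is at most $3L^*$, where $L^*$ is optimal for Time-Window Patrolling. Hence if the algorithm exactly finds the smallest $L$ admitting a standard schedule, the composite approximation loss is $4\cdot 3=12$.

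Next I would verify that {\sc Construct-Schedule}$(P,L)$ correctly decides the existence of a standard schedule of window length $L$. The key invariant, proved by induction on~$h$, is that after iteration~$h$ the set $\myS_k(h)$ contains at least one representative of every distinct representation of a $k$-robot schedule that concatenates $2^h$ atomic $L$-schedules and visits every site of weight at least $1/2^h$. The base case holds by the explicit construction in lines~\ref{li:myS}--\ref{li:mySk}; the inductive step follows because every concatenation $\sigma\oplus\sigma'$ with $\sigma,\sigma'\in\myS_k(h-1)$ is considered in the double loop, and the feasibility check in line~\ref{li:check} enforces both the reachability constraint $d(\send(\theta),\sstart(\theta'))\leq \safter(\theta)+\sbefore(\theta')$ per robot and the visiting constraint for weight-$1/2^h$ sites. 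Duplicate pruning is lossless because the 6-tuple $(\sstart,\send,\sleft,\sright,\sbefore,\safter)$ encodes all information required for any future concatenation step and for deciding visits of lower-weight sites.

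For the running time, a single-robot schedule in $\myS(h)$ is specified by four sites in $P\cup\{\nil\}$ together with $\sbefore,\safter$, each drawn from a set of $O(2^h)$ values, so $|\myS(h)|=O(n^4\cdot 4^h)$ and $|\myS_k(h)|=O(n^{4k}\cdot 4^{hk})$. Iteration $h$ examines $O(|\myS_k(h-1)|^2)=O(n^{8k}\cdot 16^{(h-1)k})$ pairs, each processed in time polynomial in~$k$ after preprocessing site positions in sorted order. Summing the geometric series over $h=1,\dots,m=\log(w_{\max}/w_{\min})$ is dominated by the last term, giving time $O(n^{8k}(w_{\max}/w_{\min})^{4k})$ per call. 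The smallest feasible~$L$ is then found by binary search over the $\mathrm{poly}(n\cdot w_{\max}/w_{\min})$-size set of critical candidates (rational combinations of pairwise distances generated by the atomic-schedule constraints), contributing a $\log(n\cdot w_{\max}/w_{\min})$ factor; an additional $O(n)$ factor absorbs the per-iteration visit-checking cost, yielding the stated $O(n^{8k+1}(w_{\max}/w_{\min})^{4k}\log(n\cdot w_{\max}/w_{\min}))$ bound.

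The main obstacle will be justifying that the 6-tuple representation is truly sufficient, i.e., that two concatenations with identical representations are interchangeable for every later concatenation as well as for the final visiting constraint. This rests on two facts that must be carefully argued: (a) $\sleft,\sright$ already encode all information any site of weight at most $1/2^h$ needs (since such sites require at most one visit in a block of length $2^h L$), and (b) $\sbefore,\safter$ completely determine how the block can be stitched to its neighbors. Once these are established, the deduplication step in {\sc Construct-Schedule} is safe, and the running-time analysis above goes through.
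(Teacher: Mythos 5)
Your proposal is correct and follows essentially the same route as the paper: the factor $12=4\cdot 3$ comes from composing Lemma~\ref{lem-TWP} with Lemma~\ref{lem:standard-schedule}, the per-call running time comes from the same count $|\myS_k(h)|=O(n^{4k}2^{2hk})$ of distinct 6-tuple representations with an $O(n)$ check per concatenated pair, and the outer $\log(n\cdot w_{\max}/w_{\min})$ factor comes from a binary search over the same $O(n^4+n^2 w_{\max}/w_{\min})$ critical window lengths determined by the Type-I and Type-II feasibility constraints. The obstacle you flag (sufficiency of the 6-tuple for deduplication) is exactly what the paper's Lemma~\ref{lem:12-running-time} establishes by the same induction on $h$.
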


\section{Conclusion and Future Work}
This is the first paper that presents approximation algorithms for multi-robot patrol scheduling minimizing maximum weighted latency in a metric space. The obvious open problem is to improve the approximation ratios for both the general metric setting and the 1D setting.

\begin{small}
\noindent\textbf{Acknowledgement:} Gao, Wang and Yang would like to acknowledge supports from NSF CNS-1618391, DMS-1737812, OAC-1939459. Raichel would like acknowledge support from NSF CAREER Award 1750780.
\end{small}

\bibliographystyle{abbrv}
\bibliography{ref_patrol} 

\ifFullVersion
    \newpage
    \appendix
    \section*{Appendix}

\section{$O(1)$-approximation for unweighted $k$-robot scheduling}
\label{subsec: appetizer approximation algorithms}
We can obtain an approximation algorithm for the patrol-scheduling problem in general metric 
spaces by making a connection to the \emph{$k$-path cover problem}, which is to find 
$k$ paths covering the $n$ sites such that the maximum length of the paths is minimized.
Suppose we have an $\alpha$-approximation algorithm for the $k$-path cover problem.
Let $c^*$ be the maximum path length in an optimal path cover.
For each of the $k$ paths in the cover, connect the last site 
with the first site to create a tour of length at most~$2\alpha c^*$. Now let the $k$ robots follow 
these $k$ tours, obtaining a schedule~$\sigma(R)$ with maximum latency bounded by~$2\alpha c^*$. 
Note that if $L$ denotes the optimal latency for the patrol-scheduling problem, then $c^*\leq L$. 
Indeed, in a solution of latency~$L$, all sites must be visited during any time 
interval of length $L$, and so the paths followed by the robots
during this interval (which have length at most~$L$) are a valid solution
to the $k$-path cover problem. Thus we obtain
a $2\alpha$-approximation for the patrol scheduling problem.

Similarly, we can solve the patrol-scheduling problem with an extra factor of two in the approximation ratio,
using the \emph{min-max $k$-tree cover problem}, which is to find $k$ disjoint trees to cover the $n$ sites such that the maximum tree weight---the weight of a tree is the sum of its edge weights---is minimized, or the \emph{$k$-min-max cycle cover problem}~\cite{xu2013approximation}, which finds $k$ cycles to cover all sites and the length of the longest cycle is minimized. The proof for both claims is similar to the case of $k$-path cover. 

\begin{lemma}
For $n$ sites in a metric space, an $\alpha$-approximation for the
$k$-min-max tree cover problem gives a $2\alpha$-approximation for the patrol scheduling problem.
\end{lemma}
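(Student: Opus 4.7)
The plan is to imitate the $k$-path cover reduction that appears immediately above the statement, but to go through trees instead of paths. Let $T^*$ denote the minimum possible value of the maximum tree weight in a $k$-min-max tree cover of the input sites, and let $L^*$ denote the optimal maximum latency for the patrol-scheduling instance. I will prove two things: (a) an $\alpha$-approximate tree cover can be turned into a schedule of latency at most $2\alpha T^*$, and (b) $T^* \leq L^*$. Combining these yields latency at most $2\alpha L^*$, as claimed.

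For (a), run the $\alpha$-approximation for $k$-min-max tree cover and obtain $k$ trees $T_1,\dots,T_k$ covering all $n$ sites with $\max_i |T_i| \leq \alpha T^*$, where $|T_i|$ denotes the total edge weight of $T_i$. For each $T_i$, double every edge and extract an Eulerian closed walk; this walk visits every vertex of $T_i$ and has total length at most $2|T_i| \leq 2\alpha T^*$. Shortcutting is not needed, since the robot may simply follow the walk through the continuous metric space $C(P,d)$. Assign robot $r_i$ to this closed walk and let it traverse the walk repeatedly at unit speed. Then every vertex of $T_i$ is visited at least once during any time window of length $2|T_i|$, so the latency of every site is at most $2\alpha T^*$.

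For (b), I will argue that an optimal patrol schedule of latency $L^*$ implies the existence of a $k$-tree cover with maximum tree weight at most $L^*$. Fix any time $t_0$ and, for each robot $r_j$, let $\pi_j$ be the continuous curve in $C(P,d)$ traced out by $r_j$ during the time interval $[t_0, t_0+L^*]$. Because each robot moves at speed at most $1$, we have $|\pi_j| \leq L^*$. Because the schedule has latency at most $L^*$, every site is visited by some robot during $[t_0, t_0+L^*]$, so $\bigcup_j \pi_j$ contains every site of $P$. Each $\pi_j$ is a connected subset of $C(P,d)$ of total length at most $L^*$ and passes through a set $V_j$ of sites; contracting $\pi_j$ to any spanning tree $T_j$ of $V_j$ in $(P,d)$ gives $|T_j| \leq |\pi_j| \leq L^*$ by the triangle inequality (the spanning tree cost does not exceed the length of any connected walk through its vertices). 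The collection $\{T_1,\dots,T_k\}$ then forms a $k$-min-max tree cover of weight at most $L^*$, and after reassigning each site to a single tree (removing duplicates) the maximum tree weight only drops. Hence $T^* \leq L^*$.

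Putting the two bounds together, the schedule we constructed has maximum latency at most $2\alpha T^* \leq 2\alpha L^*$, which is the $2\alpha$-approximation claimed. The only subtle step is part (b), where one must be careful that the curve traced by a robot is genuinely connected in $C(P,d)$ and passes through the claimed sites, so that replacing it by a spanning tree on those sites does not inflate the cost; this follows directly from the triangle inequality in $(P,d)$ and the unit-speed assumption. Everything else is routine edge-doubling and the observation that repeating a tour of length $\ell$ produces latency at most $\ell$.
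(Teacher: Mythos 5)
Your proposal follows essentially the same route as the paper: the upper bound comes from doubling each tree into a tour of length at most twice its weight and letting a robot cycle on it, and the lower bound comes from slicing an optimal schedule into a window of length $L^*$, observing that the $k$ unit-speed traces cover all sites, and converting each trace into a spanning tree of cost at most $L^*$. The one step I would tighten is your final remark that ``after reassigning each site to a single tree (removing duplicates) the maximum tree weight only drops.'' Taken literally this is false in a general metric: deleting an internal vertex of a tree and re-spanning the remaining vertices can increase the cost (e.g., a star with center $c$ and three leaves at distance $1$ from $c$ and pairwise distance $2$ has spanning tree cost $3$, but the three leaves alone require cost $4$). The paper avoids this by performing the deduplication on the robots' visiting sequences \emph{before} forming the trees: it shortcuts each walk past any site already covered by an earlier robot, and only then does the triangle inequality guarantee the resulting (now vertex-disjoint) paths are no longer. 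Your argument is repaired by the same one-line reordering, so the approach is sound; just do the shortcutting at the walk level rather than at the tree level.
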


\begin{proof}
To show the connection, take an optimal patrol schedule $\sigma(R)$ from time $0$ to time~$L$, where $L$ is the latency of $\sigma(R)$. This creates $k$ paths that collectively cover all sites. 
Denote by $\sigma_j$ the visiting sequence of robot $r_j$ within this interval. Starting from $\sigma_1$, we shortcut the paths by removing duplicate visits to the same site. Specifically, the visit by robot $r_i$ to a site $s$ is removed if $s$ has already been 
visited by a robot $r_j$ with $j\leq i$. If a site $s$ is removed with $s'$ and $s''$ to be the preceding and succeeding site respectively, the robot moves directly from $s'$ to $s''$; 
by the triangle inequality, the modified path is not longer. This produces at most $k$ 
disjoint paths that cover all sites, thus a tree cover. 
The weight of each path is at most $L$. Thus $L \geq c^*$, where $c^*$ is the optimal weight of 
a min-max tree cover with $k$~trees. On the other hand, for any $k$-tree cover with 
maximum weight~$c$, we can traverse each tree to create a tour with length no longer than~$2c$. 
Let the $k$ robots follow the $k$ tours, thus obtaining a schedule~$\sigma(R)$ with latency
bounded by~$2c$. Hence, an $\alpha$-approximation for the
$k$-min-max tree cover problem gives a $2\alpha$-approximation for the patrol scheduling problem.
\end{proof}

\begin{lemma}
An $\alpha$-approximation for the
$k$-min-max cycle cover problem gives a $2\alpha$-approximation for the patrol scheduling problem.
\end{lemma}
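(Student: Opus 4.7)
The plan is to follow the structure of the preceding $k$-min-max tree cover lemma, with two ingredients: a lower bound $c^* \leq 2L^*$ (where $c^*$ is the maximum cycle length in an optimal $k$-cycle cover and $L^*$ is the optimal patrol latency), together with a trivial way to turn any $k$-cycle cover into a schedule.

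The scheduling direction is immediate. Given any $k$-cycle cover with maximum cycle length $c$, assign one robot per cycle and let it traverse its cycle at unit speed indefinitely. Every site on that cycle is then revisited at least once every $c$ time units, so the resulting latency is at most $c$. Applied to the output of an $\alpha$-approximation, this yields a schedule of latency at most $\alpha c^*$.

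For the lower bound, I would start from an optimal patrol schedule and examine what each robot does during a time interval of length $L^*$ (with a harmless $\varepsilon$-perturbation at the boundary if needed). By the definition of latency, every site is visited during this interval, and each robot $r_j$ traces a continuous trajectory of length at most $L^*$. Using the same shortcutting trick as in the tree-cover proof, I would assign each site to the first robot that visits it, yielding for every $r_j$ an ordered list of sites $s_{j,1},\ldots,s_{j,m_j}$ along its trajectory. The metric triangle inequality then gives $\sum_i d(s_{j,i},s_{j,i+1}) \leq L^*$, so the graph path through these sites has length at most $L^*$. Closing this path by adding the edge $(s_{j,m_j},s_{j,1})$ costs at most another $L^*$, again by the triangle inequality applied to the path just constructed, producing a cycle of length at most $2L^*$. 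The $k$ resulting cycles cover all sites, so $c^* \leq 2L^*$.

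Combining the two parts gives a schedule with latency at most $\alpha c^* \leq 2\alpha L^*$, which is the claimed $2\alpha$-approximation. I expect no real obstacle beyond one mildly delicate point: confirming that a window of length exactly $L^*$ suffices to see every site at least once (a technicality implicitly handled in the tree-cover proof), which can be resolved either by an $\varepsilon$-argument or by taking a slightly longer window. Every other step is a direct application of the metric triangle inequality and parallels the tree-cover reasoning.
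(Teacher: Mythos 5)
Your proposal is correct and follows essentially the same route as the paper: truncate an optimal schedule to a window of length $L^*$, close each robot's trajectory into a cycle of length at most $2L^*$ via the triangle inequality to get $c^*\leq 2L^*$, and then let robots traverse the cycles of the $\alpha$-approximate cover. The extra shortcutting step and the boundary-of-the-window technicality you flag are minor refinements the paper glosses over, not a different argument.
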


\begin{proof}
If we take an optimal patrol schedule from time~0 
to~$L$, and ask each robot move back to its starting point, then we get $k$ cycles of
length at most $2L$. Hence, $2L>c^*$, where $c^*$ is the min-max cycle length of 
an optimal $k$-cycle cover. This implies that an $\alpha$-approximation for the
$k$-min-max cycle cover problem gives a $2\alpha$-approximation for the patrol scheduling problem.
\end{proof}
In short, we can obtain algorithms with approximation factor~$2\alpha$, where $\alpha$ is the 
approximation factor for any of the problems, \emph{$k$-path cover}~\cite{arkin2006approximations},
\emph{$k$-min-max tree cover}~\cite{khani2014improved,xu2013approximation}, or 
\emph{$k$-min-max cycle cover}~\cite{xu2013approximation}. 
To the best of our knowledge, the best approximation ratio for any of these problems is~$8/3$ 
(namely for the min-max tree cover problem). In this paper we try to get approximation factors 
for the multi-robot patrol-scheduling problem better than~$16/3$. 
    \section{Proofs for Min-Max Weighted Latency in General Metric}
\label{sec:proof_general}

\newtheorem*{lem-L-leg-L_star}{\textbf{Lemma \ref{lem: L leq L_star}}}

\begin{lem-L-leg-L_star} {\em 
Given $L$, if \textsc{$k$-robot schedule}($\W,L$) returns \textsc{False} then $L^*\geq L$, where $L^*$ is the optimal maximum weighted latency.}
\end{lem-L-leg-L_star}

\begin{proof}
There are two cases of the algorithm returning \textsc{False}. We discuss them separately.

In the first case, there is a value $j$ such that the maximum tree weight of a $\beta$-approximation of the $t$-min-max tree cover is larger than $\beta 2^{j-1} L$ for all $1\leq t\leq k$ (Line 7).
It implies that the optimal value $\lambda$ of $k$-min-max tree cover is larger than $2^{j-1} L$ for sites in $W_j$. Since the $k$-robot solution also cover all the sites in $W_j$, $\lambda/2^{j-1}$ is also a lower bound of the optimal latency (see 
\ifFullVersion
    the appendix
\else 
    \cite{afshani2020approximation}
\fi 
for details).  Thus, $L^* \geq \lambda/2^{j-1} > 2^{j-1} L/2^{j-1}  =L$.

In the second case, there is a tree with vertices that are far away from existing depots and there is no free robot anymore. Notice that there are precisely $k$ depots at this moment. Suppose the depots are $s_0,s_1, \cdots s_{k-1}$ and there is another vertex $s_{k}$ which is at distance at least $kL/w_i$ from the depot $s_i$ of weight $w_i$, for $0\leq i\leq k-1$. Apply Lemma~\ref{lem:lowerbound-opt}, the latency of the optimal schedule visiting only these $k$ sites is at least $2L$, so is the optimal latency $L^*$. 
\end{proof}

\newtheorem*{lem-short-distance}{\textbf{Lemma \ref{lem:short-distance}}}

\begin{lem-short-distance} {\em 
If \textsc{$k$-robot schedule}($\W,L$) does not return \textsc{False}, each robot is assigned at most $k(m+1)$ trees and a depot site such that 
\begin{itemize}
    \item one of the trees is the depot tree $T_{\dep}$ which includes a depot $x_{\dep}$. $x_{\dep}$ has the highest weight among all sites assigned to this robot;
    \item all other vertices are within distance $k L/\Bar{w}$ from the depot, where $\Bar{w}$ is the weight of $x_{\dep}$;
    \item each tree $T$ has vertices of the same weight $w$ and the sum of tree edge length is at most $\beta L/w$. 
\end{itemize}}
\end{lem-short-distance}

\begin{proof}
Most of the claims are straight-forward from the algorithm \textsc{$k$-robot schedule}($\W,L$). A tree $T$ assigned to a robot has vertices coming from the vertices of the same tree $T'$ in the min-max tree cover (obtained on Line 4). Thus the vertices have the same weight (say $w$). These vertices are within distance $kL/\Bar{w}$, from the depot $x_{\dep}$, where $\Bar{w}$ is the weight of $x_{\dep}$, by Line 15.
Further, the tree $T$ is always taken as a minimum spanning tree on its vertices. Thus the sum of the edge length on $T$ is no greater than that of the original tree $T'$ (with potentially more vertices), which is no greater than $\beta L/w$, by Line 5.   

It remains to prove that each robot~$r$ is assigned at most $km$ trees. Note that the loop of line~\ref{algo:k-robot-schedule_number of groups} in the algorithm has $m+1$ iterations and each loop of line~\ref{algo:k-robot-schedule_number of trees in each group} has at most $k$ iterations. Moreover, in one iteration of lines~13 to 23 each robot~$r$ is assigned at most one tree: it may be assigned a tree in line~16 when it is already non-free, and in line~22 when it was still free. Hence, $r$ is assigned at most $k(m+1)$ trees.
\end{proof}

\newtheorem*{lem-latency}{\textbf{Lemma \ref{lem:latency}} }

\begin{lem-latency} {\em
The \textsc{Single Robot Schedule}($\T=\{T_0, T_2, \cdots, T_{h-1}\}$), $h\leq k(m+1)$, returns a schedule for one robot that covers all sites included in $\T$ such that the maximum weighted latency of the schedule is at most $O(\factor \cdot L)$.}
\end{lem-latency}

\begin{proof}
By Lemma~\ref{lem:short-distance} the distance between the depot and any other vertices on tree $T_i$ is at most $k L/w_0$, where $w_0$ is the weight of the depot. By triangle inequality, the distance of any two sites (either on the same tree or on different trees) is at most $2k L/w_0= \delta$. Consider any site $s$ and assume $s \in P^i_j$ for some $P^i_j \in \mP^i$.  Let $w_i$ be the weight of the vertices in $T_i$.  Note that some path from $\mP^i$ is visited once every $h$ iterations of the while loop of line 9 to 13, and that the paths from $\mP^i$ are visited in a round-robin fashion. Thus $P^i_j$ (and, hence, site $s$) is visited once every $h \cdot |\mP^i|$ iterations. In one iteration the robot moves over a distance at most $\delta$ in line~10, and over a distance at most $\delta$ in line~12. Hence, the total distance traveled by the robot before returning to s is bounded by $h \cdot |\mP^i | \cdot 2\delta$ , and so the total weighted latency is bounded by
$$
w_i \cdot h \cdot |\mP^i | \cdot 2\delta \leq w_i \cdot  h \cdot \ceil{2|T_i|/\delta} \cdot 2\delta
$$
There are two cases. 
If $|T_i|>\delta$, the above term is at most  $w_i \cdot |T_i| \cdot h \leq 2L\cdot h$. 
If $|T_i| \leq \delta$, the above term is at most $w_i \cdot h \cdot 2 \delta \leq 2kL \cdot h$. 
Since $h \leq k(m+1)$, the weighted latency of $s$ is $O(k^2mL)$. 
\end{proof}
    
\newtheorem*{obs-1d-k1}{\textbf{Observation \ref{obs:1d-k1}.} }
\begin{obs-1d-k1} {\em
Let $P$ be a collection of $n$ sites in $\Reals^1$ with arbitrary weights. Then the zigzag schedule where a robot travels back and forth between the leftmost and the rightmost site in $P$ is optimal for a single robot.}
\end{obs-1d-k1}

\begin{myproof}
Let $s_1,\ldots,s_n$ be the sites in $P$, ordered from left to right, 
and let $w_i$ denote the weight of~$s_i$.
Then the weighted latency of~$s_i$ in the zigzag schedule 
is $w_i\cdot \max (2\; d(s_i,s_1),2\; d(s_i,s_n))$. Let $s_{i^*}$
be a site whose weighted latency is maximal, and assume without
loss of generality that $d(s_{i^*},s_1) \geq d(s_{i^*},s_n)$. 
Clearly the minimum weighted latency of a robot that only has to visit $s_1$ and $s_{i^*}$
is at most the minimum weighted latency of a robot
that must visit all sites in~$P$. The former is equal to
$w_{i^*}\cdot 2 \; d(s_{i^*},s_1)$ because the robot must go back and forth between $s_1$ and $s_{i^*}$. Since the zigzag on~$P$ has latency $w_{i^*}\cdot 2 \; d(s_{i^*},s_1)$ as well, it must thus be optimal.
\end{myproof}

\newtheorem*{thm-1D_optimal}{\textbf{Theorem  \ref{thm:1D_optimal}}}
\begin{thm-1D_optimal} {\em
Let $P$ be a set of $n$ sites in $\Reals^1$, with uniform weights, and let $k$ be 
the number of available robots, where $1\leq k\leq n$. Then there exists an optimal 
schedule such that each robot follows a zigzag schedule and the intervals 
covered by these zigzag schedules are disjoint. }
\end{thm-1D_optimal}

\begin{myproof}
Let $r_1,\ldots,r_k$ denote the available robots and assume that initially the robots
are ordered from left to right with ties broken arbitrarily. Let
$f_i(t)$ denote the position of robot $r_i$ at time~$t$.
We may assume that this ordering does not change. That is,  $f_1(t) \leq f_2(t) \leq \cdots \leq f_k(t)$ at any time~$t$.
Indeed, when two robots swap, we can switch their roles so that we keep the original order. 


Let $a_i$ and $b_i$ be the leftmost and rightmost site ever visited by $r_i$, respectively, and define $I_i := [a_i, b_i]$.
The order on the robots implies that $a_i \leq a_j$ for $i<j$.
Now consider an optimal schedule with the above properties, where we assume without loss of generality that each robot is assigned a non-empty interval,
which could be a single point. We will modify this schedule
(if necessary) to obtain an optimal schedule consisting of disjoint zigzags.
First we ensure that $a_i<a_j$ for all $i<j$. Suppose that $a_i=a_j$ for (one or more)~$j>i$. 
Note that at any time $t$ such that $f_j(t)=a_i$ for some $j>i$,
we must also have $f_i(t)=a_i$. Hence, the visits of these robots~$r_j$ 
to $a_i$ are not necessary, and we can modify their schedules so that their leftmost visited
sites are the site immediately to the right of~$a_i$. By doing this repeatedly we
obtain a schedule such that $a_i<a_j$ for all~$i<j$.

We now prove the following statement---note that this statement implies the lemma---by induction on $j$:
\begin{quotation}
\noindent There is an optimal schedule such that, for any $1\leq j \leq k$, we have
(i)~the intervals $I_1,\ldots,I_j$ are disjoint from each other and from the intervals $I_{j+1},\ldots,I_k$, and (ii) each of the robot $r_i$ with $1\leq i\leq j$
follows a zigzag on $I_i$. 
\end{quotation}
First consider the case~$j=1$. Note that $a_1$ is the leftmost site in~$P$
and that $r_1$ is the only robot visiting $a_1$. Since $r_1$ also visits~$b_1$,
the latency of $a_1$ is at least $2(b_1-a_1)$, which is achieved
if we make $r_1$ follow a zigzag along~$I_1$. This zigzag guarantees
a latency $2(b_1-a_1)$ for any site in $I_1$, so there is no need for 
another robot to visit those sites. Hence, we can ensure that the
intervals $I_2,\ldots,I_k$ are strictly to the right of $I_1$,
and so the statement is true for~$j=1$.

Now consider the case~$j>1$. Because $a_j<a_i$ for all $i>j$ we know that 
$a_j$ is not visited by any of the robots $r_i$ with $i>j$. By the induction
hypothesis $a_j$ is not visited by any of the robots~$r_i$ with $i<j$ either.
Hence, $r_j$ is the only robot visiting~$a_i$. Following the same reasoning
as in the case $j=1$ we can thus ensure that $r_j$ follows
a zigzag along $I_j$ and that the intervals $I_{j+1},\ldots,I_k$
are disjoint from $I_j$. Together with the induction hypothesis this proves
the statement for~$j$, thus finishing the proof.
\end{myproof}
    \section{An example of Min-Max Weighted Latency in $\Reals^1$}
\label{sec:weighted in 1D}
For any set of sites in $\Reals^1$ with uniform weights, there is an optimal
schedule consisting of disjoint zigzags. This is no longer true 
for arbitrary weights, however, as shown next. Thus a careful coordination
between the robots is needed in this case. 

\begin{figure}[h]
      \centering
      \includegraphics[scale=0.2]{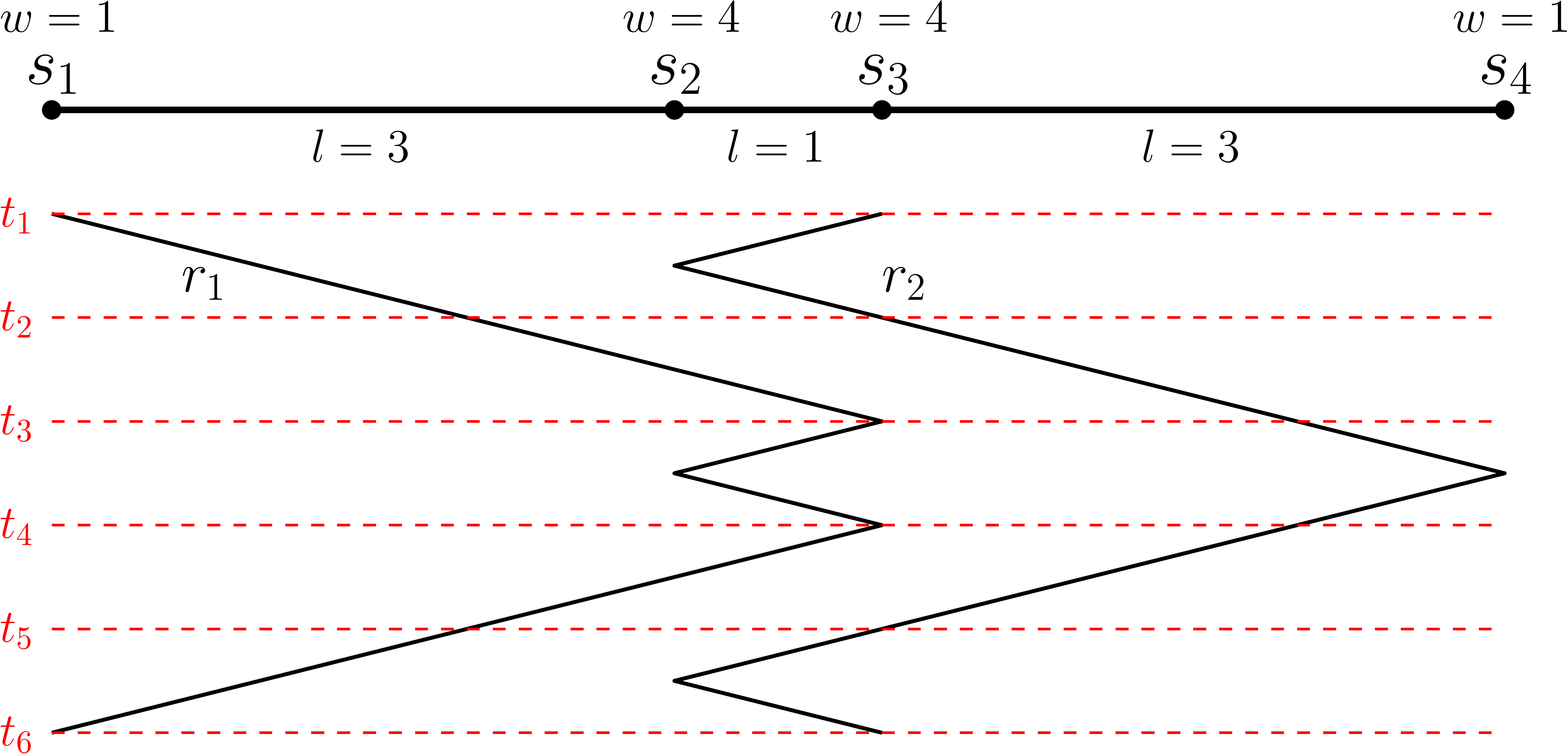}
      \caption{Optimal schedule with maximum weighted latency of $10$.}
      \label{fig:not-independ}
    \end{figure}

Figure~\ref{fig:not-independ} shows an example where a schedule for two robots
consisting of disjoint zigzags is sub-optimal. There are four sites, with $s_1, s_4$ 
having weight~$1$ and $s_2, s_3$ having weight~$4$, and the distances are as shown in the figure. The solution as shown in Figure~\ref{fig:not-independ} has maximum weighted latency of $10$. 

Now, we prove that this solution is optimal. Clearly there is an optimal solution that no robots travel anywhere outside of the interval between $s_1$ and $s_4$ -- if any robot travels to the left (resp. right) of $s_1$ (resp. $s_4$), they just stay at site $s_1$ (resp. $s_4$). We argue that there is an optimal solution in which two robots do not cross each other, if they meet at the same position at the same time and move in opposite directions, we let the two robots turn back at the meeting location. Therefore without loss of generality we assume that at any time robot $r_1$ does not stay to the right of robot $r_2$.  This means that $r_1$ visits $s_1$ and $r_2$ visits $s_4$.

Then, we first prove that $r_1$ needs to visit $s_3$ and $r_2$ needs to visit $s_2$ in the optimal solution. First if $r_2$ stays at $s_4$ at all time, then the max weighted latency at $s_3$ is at least $16$, as $r_1$ needs to travel to $s_1$. Thus $r_2$ must visit $s_3$ as well. Since the distance between $s_3$ and $s_4$ is $3$, it will take $6$ time slots for $r_2$ to visit $s_4$ from $s_3$. Denote this interval as $\Delta=[t, t+6]$.
If there is a schedule with maximum weighted latency less than $10$, $r_1$ should visit $s_3$ at least once in $[t, t+2.5)$ and at least once in $(t+3.5, t+6]$. Between these two visits, $r_1$ should visit $s_2$ at least once, otherwise the maximum weighted latency of $s_2$ exceeds $10$. In this case, $r_1$ must travel at least $10$ units of time back to $s_1$. Thus, our solution of maximum weighted latency of $10$ is optimal.

In a more general case, the distances between $s_1, s_2$ and $s_3, s_4$ are denoted as $x$. Following the schedule in Figure~\ref{fig:not-independ} with $x-1$ times of zigzag between $s_2$ and $s_3$, the schedule has maximum weighted latency of $\max(8, 4x-2)$. There are other cyclic solutions, e.g., one robot performing a zigzag between $s_1, s_3$ and another one doing a zigzag between $s_2, s_4$. The starting location is $s_1$ and $s_2$ respectively. One can verify that the latency of $s_2, s_3$ is $4x$ and the latency of $s_1, s_4$ is $2x+2$. When $x>2$, this cyclic solution performs worse than the solution in the example. In the cyclic solution with disjoint simple zigzags, one robot does a zigzag between $s_1, s_2$ and the other does a zigzag between $s_3, s_4$. The weighted latency is $8x$. With the increment of length between $s_1, s_2$ and $s_3, s_4$, this weighted latency of the best disjoint cyclic solution can become arbitrarily worse.
 
    \section{A $12$-approximation for Min-Max Weighted Latency in $\Reals^1$}
\label{sec:proofs_12_1D}

\newtheorem*{lem-TWP}{\textbf{Lemma \ref{lem-TWP}} }
\begin{lem-TWP} {\em Suppose we have a $\gamma$-approximation algorithm for the $k$-robot Time-Window 
Patrolling Problem that runs in $T(n,k,\rho)$ time. Then there is 
$4\gamma$-approximation algorithm for the $k$-robot patrol 
scheduling problem that runs in $O(n \log n + T(n,k,\rho))$ time.}
\end{lem-TWP}

\begin{myproof}
Consider an instance of the $k$-robot patrol scheduling problem, with sites
$s_1,\ldots,s_n$ and weights~$w_1,\ldots,w_n$. We first sort and scale the weights
such that $1=w_1\geq w_2 \geq \cdots \geq w_n$. Next we replace each weight~$w_i$
by the weight $w'_i$ such that $w'_i$ is of the form $(1/2)^{\alpha(i)}$ for some 
non-negative integer~$\alpha(i)$ and $w_i \leq w'_i \leq 2 w_i$. Then we run the
given $\gamma$-approximation algorithm on the modified input, and report the resulting
schedule. The algorithm obviously runs in the claimed time. It remains to prove
the approximation factor.
\medskip

Let $\sigma$ be an optimal schedule for the $k$-robot patrol 
scheduling problem for the original weights~$w_i$, and let $L^*$ be its weighted latency. 
If we use $\sigma$ with weights~$w'_i$, the weighted latency 
is at most $2L^*$. Let $\sigma'$ be an optimal schedule for
the sites with weights~$w'_i$, and let $L'$ be its weighted latency. 
We have $L' \leq 2L^*$ by the optimality of $\sigma'$.

Now consider $\sigma'$ as a solution for the Time-Window
Patrolling Problem with weights~$w'_i$. Since
the time between any two consecutive visits to a site~$s_i$ in
schedule~$\sigma'$ is at most $L'/w'_i$, the site~$s_i$ must be 
visited during every window of length at least $L'/w'_i$.
Hence, $\sigma'$ is a valid solution for window length~$L'$,
which means that $L'$ is an upper bound on the minimum
window size for the Time-Window 
Patrolling Problem with weights~$w'_i$.

Now suppose we have computed a schedule $\sigma''$ using a
$\gamma$-approximation algorithm for the Time-Window 
Patrolling Problem with weights~$w'_i$, and let~$L''$
be its window length. We have $L'' \leq \gamma L' \leq 2\gamma L^*$. 
Now consider a site $s_i$. Let $L''(s_i)$ denote the weighted latency of $s_i$ 
in~ $\sigma''$. The time between consecutive visits of $s_i$ in $\sigma''$
is at most $2L''/w'_i$, so $L''(s_i) \leq 2 L''$.
The weighted latency of schedule $\sigma''$ can therefore be bounded by
$2L'' \leq 4\gamma L^*$.
\end{myproof}
\newtheorem*{lem-standard-schedule}{\textbf{Lemma \ref{lem:standard-schedule}} }
\begin{lem-standard-schedule} {\em 
Let $L^*$ be the minimum window length that admits a valid schedule for the
Time-Window Patrolling Problem, and let $L$ be the minimum window length
that admits a valid standard schedule. Then $L \leq 3 L^*$.}
\end{lem-standard-schedule}

\begin{myproof}
Let $\sigma$ be a valid $k$-robot schedule of window length~$L^*$ for the Time-Window 
Patrolling Problem. As remarked earlier, we can assume that each robot starts 
at a site. We show how to turn $\sigma$ into a standard schedule of window 
length~$3L^*$, thus proving the lemma.

To turn $\sigma$ into a standard schedule we need to ensure that 
it consists of atomic schedules. Let $\sigma(r_\ell)$ denote the schedule
of robot~$r_\ell$ in~$\sigma$. We modify $\sigma(r_\ell)$
as follows. First we partition $\sigma(r_\ell)$ into $1/w_n$ sub-schedules 
of length~$L^*$. Let $\sigma_j(r_\ell)$ denote the $j$-the sub-schedule,
which is for the time interval~$I_j := [(j-1)L^*,jL^*]$. We modify $\sigma_j(r_\ell)$
into an atomic $3L^*$-schedule $\theta_j(r_\ell)$, as follows.

\begin{itemize}
    \item If $\sigma_j(r_\ell)$ visits at least one sites during~$I_j$, then we modify $\sigma_j(r_\ell)$ into a Type~I atomic schedule
      $(\sstart,\send,\sleft,\sright,0,2L^*)$, where
      $\sstart,\send,\sleft,\sright$ are the first, last, leftmost, and rightmost
      site visited by~$r_\ell$ during~$I_j$. Note that any site visited by
      $\sigma_j(r_\ell)$ is also visited by this atomic schedule. 
      Moreover, $r_\ell$ can indeed travel from $\sstart$ to $\send$
      via $\sleft$ and $\sright$ in time $L^*$, 
      since the distance it has to travel for this is at most the distance traveled
      by $r_\ell$ in $\sigma_j(r_\ell)$. 
\item If $r_\ell$ does not visit any site during the time interval~$t\in[(j-1)L^*,jL^*]$
      then we modify $\sigma_j(r_\ell)$ into a Type~II atomic $3L^*$-schedule
      $(\nil,\nil,\nil,\nil,0,3L^*)$.
\end{itemize}

Since a site~$s_i$ that is visited in~$\sigma_j(r_\ell)$ 
is also visited in~$\theta_j(r_\ell)$, we know that $s_i$ is still
visited in every time interval of the form~$[3(j-1)L^*/w_i,3jL^*/w_i]$, 
as required for a valid schedule of window length~$3L^*$. 

It remains to check that the concatenation of the atomic schedules is feasible. 
That is to say, if $s_i$ is the last site visited by $r_\ell$ during an interval $I_j$
in the schedule~$\sigma$, and $s_{i'}$ is the next site visited by $r_\ell$
in $\sigma$, then we need to show that $r_\ell$ can travel from $s_i$ to $s_{i'}$ 
in the schedule formed by the concatenation of the atomic schedules.
Assume the visit to $s_{i'}$ happens in interval~$I_{j'}$. Then
$d(s_i,s_{i'})\leq (j'-j+1)L^*$, because in $\sigma$ the robot~$r_\ell$ visited $s_i$ 
during $I_j$ and $s_{i'}$ during $I_{j'}$. Note that in $\theta_j(r_\ell)$,
which is of Type~I, we have $2L^*$ time units left after visiting~$s$.
Moreover, the atomic schedules $\theta_{j+1}(r_\ell),\ldots,\theta_{j'-1}(r_\ell)$
are of Type~II and so we have $3L^*$ time units for traveling in each of them.
Hence, we have
\[
2L^* + (j'-j-1)3L^* \geq (j'-j+1)L^* \geq d(s_i,s_{i'})
\]
time units to travel from~$s_i$ to $s_{i'}$, as required.
\end{myproof}

\begin{lemma}\label{lem:12-running-time}
Algorithm~{\sc Construct-Schedule} runs in $O(n^{8k+1} (1/w_n)^{4k})$ time and returns {\sc yes} if and only if the given weighted set $P$ admits a valid standard schedule of window length~$L$.
\end{lemma}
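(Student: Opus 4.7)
I would prove both parts of the lemma—correctness (the if-and-only-if) and running time—via a level-by-level analysis of the algorithm's bottom-up loop.

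For the correctness direction, the plan is induction on~$h$. The invariant to maintain is that at the end of iteration~$h$, $\myS_k(h)$ contains a representative of every feasible $k$-robot schedule that is the concatenation of $2^h$ atomic $L$-schedules (one per robot) and that visits every site of weight at least $1/2^h$ during the time interval $[0, 2^h L]$; conversely, every element of $\myS_k(h)$ represents such a schedule. The base case $h=0$ follows directly from the initialization on lines~\ref{li:myS}--\ref{li:mySk}. For the inductive step, any valid schedule $\sigma$ at level~$h$ splits in the middle into two feasible schedules $\sigma_L, \sigma_R$ at level~$h-1$, each of which satisfies the level-$(h-1)$ invariant (sites of weight $\geq 1/2^{h-1}$ are covered by $\sigma$ and hence by one of the halves in its respective subinterval). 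By induction $\sigma_L$ and $\sigma_R$ appear in $\myS_k(h-1)$, so the loop on line~\ref{li:for} encounters their pair and line~\ref{li:check} adds $\sigma_L \oplus \sigma_R$ to $\myS_k(h)$, since the explicit coverage test succeeds exactly when $\sigma$ itself is valid. The final check $\myS_k(m) \neq \emptyset$ then detects precisely the existence of a valid standard schedule of window length~$L$.

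For the running time, I would bound the number of distinct representations. A single-robot 6-tuple draws its first four entries from $P \cup \{\nil\}$ and its last two from a set of size $O(2^h)$, giving $|\myS(h)| = O(n^4 \cdot 4^h)$ after deduplication; hence $|\myS_k(h)| = O(n^{4k} \cdot 4^{hk})$. The loop on line~\ref{li:for} enumerates ordered pairs in $\myS_k(h-1)$, yielding $O(n^{8k} \cdot 16^{(h-1)k})$ concatenation attempts per level. Each attempt needs an $O(k)$ feasibility check, an $O(1)$ representation update, and an $O(n)$ scan to verify coverage of the weight-$1/2^h$ sites, for a per-attempt cost of $O(n)$. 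Summing over $h = 1, \ldots, m$ with $m = \log(1/w_n)$, the geometric series is dominated by its last term, giving total time $O(n^{8k+1} \cdot 16^{mk}) = O(n^{8k+1} (1/w_n)^{4k})$, since $4^m = (1/w_n)^2$.

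The principal obstacle I anticipate is justifying that the deduplication step is sound: if $\sigma_1, \sigma_2 \in \myS_k(h-1)$ have identical 6-tuple representations, I must show that every future extension $\sigma_1 \oplus \tau$ is equivalent to $\sigma_2 \oplus \tau$ from the algorithm's point of view. This breaks into three sub-claims: (a) the formulas producing the new representation depend only on the input tuples, which is immediate from the $\oplus$ formulas in the excerpt; (b) the per-robot feasibility constraint $d(\send(\sigma_i),\sstart(\tau)) \leq \safter(\sigma_i) + \sbefore(\tau)$ depends only on the representations; and (c) all future coverage tests can be decided from the aggregated $\sleft, \sright$ values together with the inductive coverage invariant. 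Sub-claim~(c) is the most delicate and relies on the one-dimensional structure: for a Type~I atomic schedule the robot physically traverses the entire interval $[\sleft,\sright]$ during its active phase, so aggregated intervals plus the already-guaranteed coverage of heavier sites determine exactly which new weight-$1/2^h$ sites are visited. Verifying this equivalence rigorously is the technical heart of the proof.
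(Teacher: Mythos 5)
Your proposal is correct and follows essentially the same route as the paper's proof: the same induction on $h$ with the invariant that $\myS_k(h)$ contains (a representative of) every feasible concatenation of $2^h$ atomic $L$-schedules covering all sites of weight at least $1/2^h$, and the same counting $|\myS(h)|=O(n^4 2^{2h})$, $|\myS_k(h)|=O(n^{4k}2^{2hk})$, with an $O(n)$ check per ordered pair summed as a geometric series dominated by $h=m$. Your explicit treatment of why deduplication by representation is sound (your sub-claims (a)--(c)) is a point the paper's proof leaves implicit, but it does not change the argument.
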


\begin{myproof}
The correctness of the algorithm follows from the discussion above. In particular,
one can show the following by induction on~$h$:
the set $\myS_k(h)$ computed by the algorithm contains all distinct 
representations of $k$-robot standard schedules $\langle \theta_1,\ldots,\theta_k\rangle$
such that
(i) $\theta_j\in \myS(h)$ for all $1\leq j\leq k$, where $\myS(h)$ denotes
the collection of all feasible concatenations of~$2^h$ atomic $L$-schedules, and
(ii) each site $s_i$ of weight $w_i\geq 1/2^h$ is visited at least once by one
of the robots.
Thus $\myS_k(m)$ contains all representations of valid $k$-robot standard schedules.

To prove the time bound, observe that 
\[
|\myS(h)|= O(n^4 (2^h)^2)=O(n^4 2^{2h}),
\]
and so $|\myS_k(h)|= O(n^{4k} 2^{2hk})$.
The check in line~\ref{li:check} of the algorithm takes $O(n)$ time,
assuming $k$ is a constant. The for-loop in lines~\ref{li:for} and~\ref{li:check}
therefore takes 
\[
O\left(\left(n^{4k} 2^{2(h-1)k}\right)^{2} \cdot n \right)
= O\left( n^{8k+1} 2^{4(h-1)k}  \right)
\]
time.
Note that we can also remove duplicates within this time.
Hence, the total time of the algorithm is bounded by
\[
O(kn^{4k+1})  + \sum_{h=1}^{m} O\left( n^{8k+1} 2^{4(h-1)k}  \right)
=  O(n^{8k+1} (2^m)^{4k}) 
=  O(n^{8k+1} (1/w_n)^{4k}),
\]
where the first term is the time for lines~\ref{li:myS} and~\ref{li:mySk}.
\end{myproof}

\newtheorem*{thm-12-approx}{\textbf{Theorem \ref{thm:12-approx}}}
\begin{thm-12-approx} {\em 
A 12-approximation of the min-max weighted latency for $n$ sites in $\Reals^1$ with $k$ robots, for a constant $k$, can be found in time $O(n^{8k+1} (w_{\max}/w_{\min})^{4k}\cdot \log(n\cdot w_{\max}/w_{\min}))=(n w_{\max}/w_{\min})^{O(k)}$, where
the maximum weight of any site is $w_{\max}$ and the minimum weight is $w_{\min}$.}
\end{thm-12-approx}

\begin{myproof}
We use a binary search on a set $\mathcal{L}$ of candidate values of $L$ and find the smallest possible $L$ such that Algorithm~{\sc Construct-Schedule} answers {\sc yes}.
The candidate values in $\mathcal{L}$ are those that cannot be decreased without changing the combinatorial structure of Algorithm~{\sc Construct-Schedule}. Specifically, such critical values are determined in two ways:
\begin{itemize}
    \item The minimum window length that allows for a Type-I atomic schedule with starting/ending/leftmost/rightmost positions at site positions to just fit in $L/3$; there are $O(n^{4})$ such choices. 
    \item $j$ consecutive Type-II atomic schedules that just allow a robot to travel from the last visited site to another site in $\Reals^1$ in time $2L/3+jL$; there are $O(n^2w_{\max}/w_{\min})$ such possibilities, as $j$ can take any integer from $0$ to $w_{\max}/w_{\min}$. 
\end{itemize}
Note that we can also generate these critical values in $O(n^4 + n^2w_{\max}/w_{\min})$ time. 
We can then run a binary search among these set of possible $L$ values for the lowest one for which the decision problem answers positively. The number of iterations in the binary search is bounded by $O(\log(n^4+nw_{\max}/w_{\min}))=O(\log (nw_{\max}/w_{\min}))$. Since the running time of Algorithm~{\sc Construct-Schedule} is $O(n^{8k+1} (w_{\max}/w_{\min})^{4k})$ (Lemma~\ref{lem:12-running-time}), the total running time is 
$$
O(n^{8k+1} (w_{\max}/w_{\min})^{4k}\log(n\cdot w_{\max}/w_{\min}))=(n w_{\max}/w_{\min})^{O(k)}
$$.
\end{myproof}


    
\else 
\fi

\end{document}